\newcommand\E{\ensuremath{\mathbb{E}}}
\newcommand\R{\ensuremath{\mathbb{R}}}
\newcommand\PP{\ensuremath{\mathbb{P}}}
\newcommand\la{\lambda}
\newcommand\La{\Lambda}
\newcommand\bW{\mathbf{W}}
\newcommand\eps{\varepsilon}
\newtheorem{thm}{Theorem}[section]
\newtheorem{cor}[thm]{Corollary}
\newtheorem{lemma}[thm]{Lemma}
\newtheorem{prop}[thm]{Proposition}
\theoremstyle{remark}
\newtheorem{rem}{Remark}[section]
 \numberwithin{equation}{section}
\title{Liquidation in Limit Order Books with Controlled Intensity}
\author{Erhan Bayraktar }\thanks{E. Bayraktar is supported in part by the National Science Foundation under an applied mathematics research grant and a Career grant, DMS-0906257 and DMS-0955463, respectively, and in part by the Susan M. Smith Professorship.}
\address[E. Bayraktar]{Department of
  Mathematics, University of Michigan, Ann Arbor, MI 48109}
\email{erhan@umich.edu}
\author{Michael Ludkovski}
\address[M.\ Ludkovski]{Department of Statistics and Applied Probability, University of California Santa Barbara, CA 93106-3110}
\email{ludkovski@pstat.ucsb.edu}
\keywords{Limit order books, controlled intensity, optimal control of point processes,  time to liquidation, optimal control of queueing networks, fluid limit, viscosity solutions}
\begin{document}
\maketitle

\begin{abstract}

We consider a framework for solving optimal liquidation problems in limit order books. In particular, order arrivals are modeled as a point process whose intensity depends on the liquidation price. We set up a stochastic control problem in which the goal is to maximize the expected revenue from liquidating the entire position held. We solve this optimal liquidation problem for power-law and exponential-decay order book models explicitly and discuss several extensions. We also consider the continuous selling (or fluid) limit when the trading units are ever smaller and the intensity is ever larger. This limit provides an analytical approximation to the value function and the optimal solution. Using techniques from viscosity solutions we show that the discrete state problem and its optimal solution converge to the corresponding quantities in the continuous selling limit uniformly on compacts.
\end{abstract}

\section{Introduction}
Liquidation of large securities positions has emerged as an important problem in financial mathematics, linking together models of market microstructure and control theory. In this paper we consider an investor who liquidates a position through limit orders placed in a limit order book (LOB). The investor does so by choosing the price of the limit order; the higher the price of the limit order, the smaller the probability that it would be filled. The objective of the investor is to come up with an optimal limit order strategy that maximizes her expected revenue by date $T$.

Our model for the above problem is based on a point-process view of limit order books which treats liquidation as a sequence of discrete events, i.e.~order matches. More precisely, we assume that the investor effectively controls the frequency of her trades by choosing the spread $s$ above the current bid price $P_t$. The trade intensity is controlled as $\Lambda(s)$ and when a trade occurs, the investor generates a liquidation profit of $s$. Similar setups have been proposed in \cite{sa08}, \cite{ContStoikovTalreja,ContDeLarrard10} and rely essentially on a queueing system representation of LOB's.

A crucial modeling difference is whether execution takes place through market or limit orders. If investor trades via market orders, she necessarily encounters price impact through ``eating away'' a portion of the LOB. The precise price impact depends on the \emph{shape} of the LOB, as well as its resilience. Conversely, there is no transactions or fill risk as market orders execute instantaneously. This point of view is taken in e.g., \cite{ObizhaevaWang06,schied07,AlfonsiSchied10}. On the other hand, if the investor trades through limit orders, liquidation depends on being ``lifted'' by a sufficiently large market order, leading to substantial fill risk that again depends on the shape and depth of the LOB. The fill risk is related to the concept of virtual price impact \cite{WeberRosenow05} and is the focus of our model here. Related approaches to trading via limit orders can be found in \cite{sa08,CarteaJaimungal10,GueantLehalle11,GuilbaudPham11}. Also, in  our previous work \cite{BL10} we considered the same LOB as here but with an uncontrolled trade intensity and temporary price impact from order size.

Ideally, a fully specified model will reconcile the two approaches above, as well as consider the underlying risk preferences of the investor. This is especially important for dealing with simultaneous trading on multiple exchanges, see the very recent preprints \cite{KratzSchoneborn10,SchiedKlock11}, as well as Section \ref{sec:multi-scale} below. A full treatment of this problem will be the subject of a separate paper.

Our starting point is a discrete-state problem for an investor holding $n$ shares of an illiquid asset, $n \in \mathbb{N}$. Since practically speaking $n$ is often large (on the order of hundreds of thousands), we also investigate the fluid limit of our setup. On a technical level, the fluid limit provides asymptotic results for the discrete-state problem (see Remark~\ref{rem:asyp}), which is the main focus of our paper.

On a formal level, our control problem is equivalent to a controlled death process and is closely related to fluid approximations of some queueing problems. We refer to \cite{BauerleAAP00,Bauerle01,BauerleAAP02,day10,PiunovskiyMMOR09,Piunovskiy11} and references therein for the most relevant strand of this rich literature. In contrast with the previous literature, which uses probabilistic arguments, we utilize viscosity techniques to show convergence (both of the value functions and the corresponding optimal controls) from the discrete- to the continuous-state problems.


An investor holds $n$ shares of an asset. Let $(P_t)_{t \geq 0}$ be the bid price process for the underlying asset. Let $r \ge 0$ be the risk-free rate. We assume that $e^{-r t} P_t$ is a martingale with respect to the optimization measure $\PP$ on a filtered probability space $(\Omega, \mathcal{F}, (\mathcal{G}_t))$. This assumption is consistent with standard market microstructure models, see e.g.~\cite{AlfonsiSchied10}. Let $\Lambda_t$ be the (controlled) intensity of order fill, and let $s_t \ge 0$ be the spread between the bid price and the limit order of the investor. Denote by $N_t$ the $\mathcal{G}$-adapted counting process of order fills and $\tau_k$ the corresponding arrival times,
$$ N_t = \sum_k 1_{\{\tau_k \le t\}}.$$
Then $N_t - \int_0^t \Lambda_s \,ds$ is a martingale and expected revenue is
\begin{align}\label{eq:disc-revenue}
\E\left[ \sum_{i=1}^n e^{-r \tau_i} (P_{\tau_i} + s_{\tau_i}1_{\{\tau_i \leq T\}})\right].
\end{align}
We assume that the investor has a deadline date $T \le +\infty$ by which all trades must be completed. Remaining shares are liquidated at zero profit at $T$.

To introduce the liquidation control, we assume that $\Lambda_t = \Lambda(s_t)$, so that the intensity of order fills is a function of the offered spread above the bid price. Moreover, we assume that the bid price $P$ is unaffected by the limit orders created via $(s_t)$. Since $e^{-r t}P_t$ is a martingale, the first term in \eqref{eq:disc-revenue} is independent of $\tau_i$. Indeed, $\E[ \sum_{i=1}^n e^{-r \tau_i} P_{\tau_i}] = n P_0$ and we may ignore $P$ in the subsequent analysis.

We define
\begin{align}\label{defn:V}
V(n,T) & := \sup_{(s_t) \in \mathcal{S}_T} \E \left[ \sum_{i=1}^n e^{-r \tau_i} s_{\tau_i}1_{\{\tau_i \leq T\}} \right] \\ &=  \sup_{(s_t) \in \mathcal{S}_T} \E \left[\int_0^{T \wedge \tau(X)}e^{-rt}s_t  \, dN_t\right]  = \sup_{(s_t) \in \mathcal{S}_T} \E \left[\int_0^{T \wedge \tau(X)}e^{-rt}s_t \Lambda(s_t) \, dt\right], \label{defn:V2}
\end{align}
where
\[
\tau(X):=\inf\{t \geq 0: X_t=0\}\]
is the time of liquidation.
Here, $X_t:=X_0-N_t$, with $X_0=n$, is a ``death" (or inventory) process with intensity $\Lambda(s_t)$.
Note that $T$ in \eqref{defn:V} represents \emph{time-to-maturity} and $ \mathcal{S}_T$ is the collection  of $\mathcal{F}$-adapted controls, $s_t \ge 0$ with $\mathcal{F}_t := \sigma (N_s : s \le t)$.  The boundary conditions on $V$ are $V(n,0) = 0\;\forall n$ (terminal condition in time) and $V(0,T) = 0 \;\forall \;T$ (exhaustion).

\begin{rem}
Our model is related to the limit order book setup of \cite{sa08}, which assumes that limit orders are ``lifted'' through sufficiently large \emph{market} buy orders. Namely, a market buy order of size $q$, hits all limit sell orders that are within $I(q)$ of the best bid. Assuming that buy market orders arrive in the form of a Poisson random measure on $\R_+ \times \R_+$ with arrival intensity $\bar{\la} dt$ and volume (mark) distribution $f(dq)$, $q \ge 0$, a sell limit order at a given spread $u$ is lifted with probability $\PP( I(q) > u)$. By the thinning lemma on Poisson processes, such matching buy orders form a Poisson process with intensity $\bar{\la} \int_{I^{-1}(u)}^\infty f(dq)$. Empirical studies suggest a power-law depth function $f(dq) \propto q^{-1-a} dq$ \cite{sa08} and therefore if $\Lambda(0) < \infty$, we can view our model within the \cite{sa08} framework,  $\Lambda(s) \propto [I^{-1}( s)]^{-a}$, with $I^{-1}$ the virtual price impact function \cite{WeberRosenow05}.
\end{rem}

As in \cite{MR2519845}, the above control problem can be transformed into a discrete-time Markov decision problem and the classical results from \cite[Ch. 8]{MR511544} can be used to prove a dynamic programming principle. Using the latter result one can show that the value function is a viscosity solution of
\begin{align}\label{eq:hjb-V}
-V_T + \sup_{s \ge 0} \Lambda(s) \bigl[ V(n-1,T) - V(n,T) + s \bigr] - rV(n,T) = 0,
\end{align}
with boundary conditions $V(0,T)=V(n,0)=0$ and $V_T$ denoting partial derivative with respect to time-to-expiration. Standard results also imply that an optimal control can be taken of Markov feedback type, $s^*_t = s(X^*_t, T-t)$.
However,
in most of the examples below we will obtain explicit solutions to this dynamic programming equation. Then a \emph{verification lemma} can be used to justify that the solution of \eqref{eq:hjb-V} is indeed the value function.

The optimization problem described in \eqref{defn:V} is simplified but highly tractable. In most of the examples below, we are able to obtain closed-form solutions which provide direct insight into the relationship between the LOB model and its depth function and the investor's liquidation strategy. In Section \ref{sec:power-law} we give an explicit solution for \eqref{defn:V} in the case of a power-law intensity control $\La(s)$. Section \ref{sec:fluid} then studies convergence of the discrete problem \eqref{eq:hjb-V} to its continuous-state fluid limit. Our key Theorem \ref{thm:fir}, complemented by Proposition \ref{prop:VDinctov} and Corollary \ref{cor:fc}, gives a full account of this convergence using techniques from viscosity solutions of nonlinear partial differential equations. In Section \ref{sec:exp-law} we return to \eqref{defn:V} for the case where $\La(s)$ is of exponential shape; we are again able to provide several closed-form solutions. Finally, Section \ref{sec:extensions} considers several extensions and numerical illustrations of \eqref{defn:V}, including generic $\La(s)$, which shed additional light on the problem structure.

\section{Power-Law Limit Order Books}\label{sec:power-law}
In this section we assume that incoming buy orders have a power-law distribution for the spread, $\Lambda(s) = \frac{\lambda}{s^\alpha}$ for some $\alpha > 1$. It can be observed from the computations below that if $\alpha \le 1$, then no optimal control exists. Similar assumption was made (and justified empirically) by \cite{sa08} who write that in realistic markets $\alpha \in [1.5,3]$.

\begin{prop}\label{Prop:1}
Assume that $\Lambda(s) = \lambda s^{-\alpha}$ with boundary conditions $ V(0,T)=V(n,0) =0$ for all $n$. Then the solution of \eqref{eq:hjb-V} and the optimal spread are respectively
\begin{align}\label{eq:main-model}
V(n,T) = c_n \left(1-e^{-r \alpha T}\right)^{1/\alpha}, \qquad s^*(n,T) = \left( \frac{ \la}{\alpha r  c_n} \right)^{1/(\alpha-1)} \cdot \left(1-e^{-r \alpha T}\right)^{1/\alpha},
\end{align}
with $c_n$ satisfying the recursion
\begin{align}\label{eq:power-c-recursion}
r c_n = A_\alpha \lambda (c_n - c_{n-1})^{1-\alpha}, \quad n \ge 1, \qquad\qquad c_0 = 0,
\end{align}
where
\begin{align}
A_\alpha:= \frac{(\alpha-1)^{\alpha-1}}{\alpha^\alpha}.
\end{align}
\end{prop}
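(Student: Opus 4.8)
The plan is to treat \eqref{eq:hjb-V} as a sequentially coupled family of first-order ODEs in the time-to-maturity $T$, one for each inventory level $n$, and to solve them by a separation-of-variables ansatz that I verify directly. First I would carry out the inner maximization. Writing $\Delta_n := V(n,T) - V(n-1,T)$ and inserting $\Lambda(s) = \lambda s^{-\alpha}$, the bracketed objective becomes $\lambda\bigl(s^{1-\alpha} - \Delta_n s^{-\alpha}\bigr)$, whose derivative vanishes at the interior point $s^* = \frac{\alpha}{\alpha-1}\Delta_n$, provided $\Delta_n > 0$. Substituting this back gives $\sup_{s\ge 0}\Lambda(s)\bigl[s - \Delta_n\bigr] = A_\alpha \lambda\, \Delta_n^{1-\alpha}$ with $A_\alpha = (\alpha-1)^{\alpha-1}/\alpha^\alpha$, so that \eqref{eq:hjb-V} collapses to
\begin{align*}
-V_T(n,T) + A_\alpha \lambda\,\bigl(V(n,T)-V(n-1,T)\bigr)^{1-\alpha} - rV(n,T) = 0 .
\end{align*}

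Next I would guess — and, for the base case $n=1$, genuinely derive — the separable form $V(n,T) = c_n\,f(T)$ with $f(T) := (1-e^{-r\alpha T})^{1/\alpha}$. Indeed, for $n=1$ the term $V(0,\cdot)\equiv 0$ makes the reduced equation a Bernoulli ODE which the substitution $W := V^\alpha$ linearizes to $W_T + r\alpha W = \alpha A_\alpha \lambda$ with $W(0)=0$, whose solution is a constant multiple of $(1-e^{-r\alpha T})$; this exposes the common time profile $f$. Substituting $V(n,T)=c_n f(T)$ in general and using the identity $f'(T) = r\bigl(1-f(T)^\alpha\bigr)f(T)^{1-\alpha}$, division of the reduced equation by $f(T)^{1-\alpha}$ makes the $f(T)^\alpha$ contributions cancel and removes all time dependence, leaving precisely the algebraic recursion $rc_n = A_\alpha \lambda (c_n - c_{n-1})^{1-\alpha}$ with $c_0 = 0$. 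The boundary conditions $V(n,0)=0$ and $V(0,T)=0$ then follow from $f(0)=0$ and $c_0=0$.

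I would then confirm that the recursion is well posed and, crucially, that it produces a strictly increasing sequence, which is what legitimizes the interior maximizer above. Fixing $c_{n-1}\ge 0$ and setting $h(c) := rc - A_\alpha\lambda(c-c_{n-1})^{1-\alpha}$ on $(c_{n-1},\infty)$, one checks $h(c)\to-\infty$ as $c\downarrow c_{n-1}$ (because $1-\alpha<0$), $h(c)\to+\infty$ as $c\to\infty$, and $h'(c) = r + A_\alpha\lambda(\alpha-1)(c-c_{n-1})^{-\alpha} > 0$; hence there is a unique root $c_n > c_{n-1}$, so $\Delta_n = (c_n-c_{n-1})f(T) > 0$ as required. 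Finally I would recover the optimal spread by writing $s^*(n,T) = \frac{\alpha}{\alpha-1}(c_n - c_{n-1})f(T)$, inverting the recursion to get $c_n - c_{n-1} = (A_\alpha\lambda/(rc_n))^{1/(\alpha-1)}$, and simplifying the constant via $A_\alpha^{1/(\alpha-1)} = (\alpha-1)/\alpha^{\alpha/(\alpha-1)}$, which yields the stated $s^*(n,T) = (\lambda/(\alpha r c_n))^{1/(\alpha-1)} f(T)$.

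The main obstacle is not the algebra but the internal consistency of the ansatz: the interior maximizer and the closed form of the supremum are valid only when $\Delta_n > 0$, so the monotonicity $c_n > c_{n-1}$ of the recursively defined sequence is the true linchpin rather than an afterthought. If $\Delta_n \le 0$ the supremum is $+\infty$, which also matches the paper's observation that for $\alpha \le 1$ no optimal control exists. With the verification lemma quoted in the text, solving \eqref{eq:hjb-V} in this way identifies the value function and its feedback control.
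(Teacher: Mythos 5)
Your proposal is correct and follows essentially the same route as the paper's proof: perform the inner maximization to get $s^* = \frac{\alpha}{\alpha-1}(V(n,T)-V(n-1,T))$ and the reduced equation, solve the $n=1$ case explicitly (your Bernoulli substitution $W=V^\alpha$ is just a repackaging of the paper's separation of variables), and then verify the separable ansatz $V(n,T)=c_n(1-e^{-r\alpha T})^{1/\alpha}$ to obtain the recursion \eqref{eq:power-c-recursion}. Your explicit check that the recursion admits a unique root with $c_n > c_{n-1}$ (which legitimizes the interior maximizer) is a nice self-contained addition; the paper obtains this monotonicity only afterwards, in Remark~\ref{rem:cnv}, by appealing to the probabilistic definition \eqref{defn:V}.
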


\begin{rem}\label{rem:cnv}
Note that $V$ is ``concave" in $n$ in the sense that
\[
V(n+1,T)-V(n,T) \leq V(n,T)-V(n-1,T),
\]
i.e., its linear interpolation in $n$ is concave in the usual sense.
This follows immediately from \eqref{eq:power-c-recursion} since
\[
c_{n+1}-c_n = \left(\frac{r c_{n+1}}{\lambda A_{\alpha}}\right)^{1/(1-\alpha)} \leq \left(\frac{r c_{n}}{\lambda A_{\alpha}}\right)^{1/(1-\alpha)} = c_{n}-c_{n-1},
\]
where the inequality follows from the fact that $c_n$ (or $V(n,T)$) is increasing in $n$. The latter follows directly from \eqref{defn:V}.

Also observe from \eqref{eq:main-model} and \eqref{eq:power-c-recursion} that
\[
s^*(n,T)=\frac{\alpha}{\alpha-1}(V(n,T)-V(n-1,T)),
\]
which implies that $n \mapsto s^{*}(n,T)$ is a decreasing function, because $n \mapsto V(n,T)$ is ``concave''. One can also think of $s^*$ as the derivative of the linear interpolation of $V$ in $n$.
\end{rem}

\begin{proof}
With the power law assumption \eqref{eq:hjb-V} reduces to
\begin{align}\label{eq:power-V}
-V_T + \sup_s \frac{\lambda}{ s^\alpha}( V(n-1,T)-V(n,T)+s) - rV = 0,
\end{align}
and therefore the candidate optimal policy is $t\mapsto s^*(X_t,T-t)$ in which $s^*(n,T) = \frac{\alpha}{\alpha-1} (V(n,T) - V(n-1,T))$.
To begin solving this equation, we start with $n=1$. Since $V(0,T) = 0$ for all $T$, we obtain for $V=V(1,T)$
$$
-V_T + A_\alpha \lambda V^{1-\alpha} - rV = 0. $$
This is a separable ordinary differential equation (ODE) which simplifies to
$$
T +C = \int^\cdot \frac{ V^{\alpha-1}}{A_\alpha \lambda - rV^{\alpha}} dV.
$$
Using the boundary condition $V(1,0) = 0$ we integrate to obtain
$$
\log( A_\alpha \lambda - rV^\alpha) = - r\alpha(T+C) \quad \Longleftrightarrow \quad V(1,T)= \left\{ \frac{A_\alpha \lambda}{r} (1-e^{-r \alpha T}) \right\}^{1/\alpha}.
$$
Considering the equation for general $n > 1$ we therefore make the ansatz $V(n,T) = c_n (1-e^{-r \alpha T})^{1/\alpha}$ and plugging into \eqref{eq:power-V} the relation \eqref{eq:power-c-recursion} follows.
\end{proof}

\begin{rem}
If no discounting is present $r=0$, one can verify that the solution of \eqref{eq:hjb-V} is $V(n,T) = d_n T^{1/\alpha}$, where the sequence $(d_n)$ satisfies the recursion
$$
d_n = \la \left( \frac{\alpha-1}{\alpha} \right)^{\alpha-1} (d_n - d_{n-1})^{1-\alpha}, \qquad d_0 = 0.
$$
This result can also be obtained by taking the limit $r \to 0$ in \eqref{eq:main-model}, \eqref{eq:power-c-recursion}.
\end{rem}

Fixing $X_0 = n$, the inter-trade intervals $\sigma_i := \tau_i  - \tau_{i-1}$, $i \le n$ have survival functions given by
$$
\PP( \sigma_i > t  | \tau_{i-1}) = \exp\left(-\int_0^t \La(s^*(n-i+1,T-\tau_{i-1}-s)) \,ds\right).
$$
Noting that $\La(s^*(n,T)) = \frac{C(n)}{1-e^{-r \alpha T}}$ for some constant $C(n)$, it follows that $\int_0^\eps \La(s^*(n,T)) \, dT = +\infty$ for all $n$ and $\eps$ and therefore $\PP( \sigma_i \le T -\tau_{i-1} ) = 1$ for all $i \leq n$. We conclude that even though there is no direct penalty if some orders remain at $T$,  with probability one, the full inventory is liquidated by $T$, $X^*_T = 0$ $\PP$-a.s. In particular, the problem with a \emph{hard} liquidation constraint $V(x,0) = -M 1_{\{x \ge 0\}}$ for any liquidation penalty $M \ge 0$ will have the same solution as in Proposition \ref{Prop:1}.

\subsection{Infinite Horizon}
As the execution horizon $T$ grows, the investor faces a weaker liquidation constraint. Nevertheless, she still prefers to sell earlier than later due to the discount parameter $r$ that incentivizes faster liquidation. For the limit $T\to\infty$ we obtain an infinite-horizon model whereby strategies are time-homogenous.

Taking $T \to \infty$ in \eqref{eq:main-model} we find that $V(n) = c_n$ and $s^*(n) = \la^{1/(\alpha-1)} (\alpha r c_n)^{1/(1-\alpha)}$. To understand how quickly execution takes place let us introduce \emph{expected time to liquidate} $S(n)$ which is defined to be $S(n) := \E[ \tau(X^*) | X^*_0=n]$,
in which $X^*$ is the death process whose intensity at time $t$ is $\Lambda(s^*(X^*_t))$ (representing optimally controlled inventory at $t$).
When the inventory is $X^*_t = n$, liquidation occurs at rate $\La(s^*(n))$, so that the interval until the next trade has an Exponential distribution with mean $1/\La(s^*(n))$. It follows that
\begin{align}
S(n) = \sum_{j=1}^n \La(s^*(j))^{-1} = \la^{1/(\alpha-1)} \sum_{j=1}^n  (\alpha r c_j)^{-\alpha/(\alpha-1)}.
\end{align}

\section{Continuous selling limit}\label{sec:fluid}
To better understand the results of Proposition \ref{Prop:1} we consider a limiting continuous model.
Let us denote the number of shares initially held by $x$.

We first introduce a sequence of discrete control problems that converge to the continuous selling limit. For $0 < \Delta \le 1$, consider the problem where shares are sold at $\Delta$ increments and the intensity of order fills is $\Lambda^{\Delta}(s):=\Lambda(s)/\Delta$. We will denote by $X^{\Delta}$ the ``death" process with this intensity and decrements of size $\Delta$. Then the resulting value function
\begin{align}\label{defn:V-delta}
V^{\Delta}(x,T) := \sup_{(s_t) \in \mathcal{S}_T} \E \left[ \sum_{i=1}^{x/\Delta} e^{-r \tau_i} \Delta  \cdot  s_{\tau_i}1_{\{\tau_i \leq T\}} \right]=  \sup_{(s_t) \in \mathcal{S}_T} \E \left[\int_0^{T \wedge \tau(X^{\Delta})}e^{-rt}s_t \Lambda(s_t)\,dt\right],
\end{align}
$x \in \{0, \Delta, 2 \Delta, \cdots\}$, $T \in \mathbb{R}_+$ would satisfy
\begin{equation}\label{eq:disinn}
-V^{\Delta}_T + \sup_{s \geq 0} \frac{\lambda} {s^\alpha \Delta}( V^{\Delta}(x-\Delta ,T)-V^{\Delta}(x,T)+s \Delta) - rV^{\Delta} = 0
\end{equation}
in viscosity sense.

Let us consider the first order partial differential equation (PDE)
\begin{equation}\label{eq:ffopde}
-v_T+\sup_{s \geq 0} \lambda \frac{s-v_x}{s^{\alpha}}-rv=0,
\end{equation}
which can be written as
\[
-v_T+ A_{\alpha}\lambda v_x^{1-\alpha}-r v=0,
\]
with boundary conditions $v(x,0)=v(0,T)=0$. The solution of \eqref{eq:ffopde} has the following deterministic control representation
\begin{equation}\label{eq:det-val-func}
v(x,T)=\sup_{(s_t) \in \mathcal{S}_T} \int_0^{T \wedge \tau(X^{(0),x})}\frac{\lambda}{s_t^{\alpha-1}}e^{-rt} dt,
\end{equation}
where $dX^{(0),x}_t=-\lambda s_t^{-\alpha}\,dt$, $X^{(0),x}_0=x$.
In fact, the solution of \eqref{eq:ffopde} is explicitly given by
\begin{align}\label{eq:toy-pde-sol}
v(x,T)= \left(\frac{\lambda}{r \alpha}\right)^{1/\alpha}x^{(\alpha-1)/\alpha}\left(1-e^{-r \alpha T}\right)^{1/\alpha}.
\end{align}
We denote the optimizer in \eqref{eq:ffopde} by $s^{(0)}(x,T)$, which is explicitly given by
\[
s^{(0)}(x,T)= \left(\frac{\lambda}{\alpha r}\right)^{1/\alpha}\frac{1}{x^{1/\alpha}}\left(1-e^{-r \alpha T}\right)^{1/\alpha}.
\]

\begin{rem}\label{rem:trade-curve-sec3} Plugging the optimizer back into the dynamics for $X^{(0),x}$ we obtain that
\[
dX^{(0),x}_t=- \frac{\alpha r X_t^{(0),x}}{1-e^{-r \alpha (T-t)}}dt,
\]
which can be explicitly solved as
\begin{align}\label{eq:trade-curve-power-det}
X^{(0),x}_t=x \exp\left({-\int_0^{t}\frac{\alpha r}{1-e^{-\alpha r (T-u)}}du}\right).
\end{align}
\end{rem}

Let $x \in \mathbb{R}_+$ be fixed and let us consider all the collections $\{0,\Delta, 2 \Delta, \cdots\}$ of grids that contain $x$ as an element. In the next result, will show that as $\Delta \to 0 $ then $V^{\Delta}(x) \to v(x)$.
In fact, the next result shows that this convergence is uniform on compacts.
\begin{thm}\label{thm:fir}
As  $\Delta \to 0$, $V^{\Delta} \to v$ uniformly on compact sets.
\end{thm}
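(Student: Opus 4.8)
The plan is to run the Barles--Souganidis half-relaxed limit method, treating \eqref{eq:disinn} as a monotone, consistent semi-discretization (discrete in $x$, continuous in $T$) of the first-order Hamilton--Jacobi equation \eqref{eq:ffopde}, whose unique solution is the explicit $v$ of \eqref{eq:toy-pde-sol}. First I would record the clean form of the scheme: carrying out the inner maximization in \eqref{eq:disinn} exactly (as in the proof of Proposition \ref{Prop:1}) turns it into $-V^\Delta_T + A_\alpha\la\big((V^\Delta(x,T)-V^\Delta(x-\Delta,T))/\Delta\big)^{1-\alpha} - rV^\Delta = 0$, and the $\Delta$-analogue of Proposition \ref{Prop:1} gives $V^\Delta(x,T)=c^\Delta_{x/\Delta}(1-e^{-r\alpha T})^{1/\alpha}$ with $rc^\Delta_n = A_\alpha\la\,\Delta^{\alpha-1}(c^\Delta_n-c^\Delta_{n-1})^{1-\alpha}$. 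This recursion is a backward-difference discretization of the ODE $rg=A_\alpha\la(g')^{1-\alpha}$, $g(0)=0$, solved by $g(x)=(\la/(r\alpha))^{1/\alpha}x^{(\alpha-1)/\alpha}$. A discrete comparison against this concave profile yields the a priori bound $0\le V^\Delta\le v$, which provides the stability (local uniform boundedness) required below.

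Next I would define the half-relaxed limits $\overline V(x,T):=\limsup_{\Delta\downarrow0,\,(x',T')\to(x,T)}V^\Delta(x',T')$ and $\underline V(x,T):=\liminf_{\Delta\downarrow0,\,(x',T')\to(x,T)}V^\Delta(x',T')$, which are finite by stability, satisfy $\underline V\le\overline V$ everywhere, and equal $0=v$ on $\{x=0\}\cup\{T=0\}$ since $V^\Delta$ vanishes there for every $\Delta$. The core step is to show $\overline V$ is a viscosity subsolution and $\underline V$ a supersolution of \eqref{eq:ffopde}. For the subsolution property, let $\phi$ be smooth with $\overline V-\phi$ attaining a strict local maximum at an interior point $(x_0,T_0)$; the standard perturbation lemma produces $\Delta_k\downarrow0$ and grid points $(x_k,T_k)\to(x_0,T_0)$ at which $V^{\Delta_k}-\phi$ is locally maximal and $V^{\Delta_k}(x_k,T_k)\to\overline V(x_0,T_0)$. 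Maximality in $T$ gives $\partial_T V^{\Delta_k}=\partial_T\phi$ there, while maximality over the grid gives $p_k:=(V^{\Delta_k}(x_k,T_k)-V^{\Delta_k}(x_k-\Delta_k,T_k))/\Delta_k \ge (\phi(x_k,T_k)-\phi(x_k-\Delta_k,T_k))/\Delta_k\to\phi_x(x_0,T_0)$. Feeding these into the (exactly satisfied) scheme gives $A_\alpha\la\,p_k^{1-\alpha}=\partial_T\phi(x_k,T_k)+rV^{\Delta_k}(x_k,T_k)$; since $p\mapsto p^{1-\alpha}$ is a decreasing bijection of $(0,\infty)$, convergence of the right-hand side pins down $p_k\to p_\infty$, and $p_\infty\ge\phi_x$ yields $p_\infty^{1-\alpha}\le\phi_x^{1-\alpha}$, i.e. $-\phi_T+A_\alpha\la\phi_x^{1-\alpha}-r\overline V\ge0$ at $(x_0,T_0)$, the subsolution inequality for \eqref{eq:ffopde}. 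The supersolution property of $\underline V$ follows symmetrically, the reversed extremum flipping $p_k\le\cdots$ and hence the inequality.

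Finally I would invoke a comparison principle for \eqref{eq:ffopde}: any bounded subsolution lies below any bounded supersolution once they are ordered on $\{x=0\}\cup\{T=0\}$. Applied to $\overline V$ and $\underline V$ this gives $\overline V\le\underline V$; combined with $\underline V\le\overline V$ it forces $\overline V=\underline V$, this common value being the unique solution $v$. The equality of the upper and lower half-relaxed limits to the continuous function $v$ is equivalent to $V^\Delta\to v$ locally uniformly, which is the assertion of the theorem.

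The \emph{main obstacle} is precisely this comparison principle. The Hamiltonian $p\mapsto A_\alpha\la\,p^{1-\alpha}$ is neither Lipschitz nor bounded: it is singular as $p\downarrow0$, and the explicit solution has $v_x\sim x^{-1/\alpha}\to\infty$ as $x\downarrow0$, so the usual doubling-of-variables estimate must be adapted to this singular gradient behaviour and to the degenerate spatial boundary at $x=0$, where one must also verify that the relaxed limits attain the boundary datum in the viscosity sense and not merely pointwise. A secondary technical point, already present in the sub/supersolution step, is to exclude the degenerate cases $p_k\to0$ or $p_k\to\infty$ (equivalently to guarantee $\partial_T\phi+r\overline V>0$) so that $p\mapsto p^{1-\alpha}$ may be inverted in the limit; here I would rely on the a priori bound $0\le V^\Delta\le v$ together with the strict positivity of $v_x$ in the interior.
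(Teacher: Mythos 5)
Your setup (the explicit form of the scheme after performing the inner maximization, the a priori bound $0\le V^\Delta\le v$ by discrete comparison with the concave profile, the half-relaxed limits, and the sub/supersolution step) is sound and is in the same Barles--Souganidis spirit as the paper. But the proposal has a genuine gap at its decisive step: you \emph{invoke} a comparison principle for \eqref{eq:ffopde} with the unconstrained Hamiltonian $H(p)=A_\alpha\lambda p^{1-\alpha}$, and, as you yourself concede, no such result is available off the shelf. This Hamiltonian is $+\infty$ for $p\le 0$, is neither Lipschitz nor bounded as $p\downarrow 0$, and the limit function has $v_x\to 0$ as $T\downarrow 0$ and $v_x\to\infty$ as $x\downarrow 0$, so the standard doubling-of-variables argument does not go through, and one must additionally rule out loss of the boundary condition at $x=0$ in the viscosity sense. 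Flagging this as ``the main obstacle'' and stating that the estimate ``must be adapted'' leaves the central analytic content of the theorem unproved: everything else in your argument is routine, and the theorem stands or falls with exactly this comparison statement.

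The paper's proof is architected precisely to avoid that obstacle, and the difference is structural, not cosmetic. It first truncates the control space to $s\in[1/k,k]$, defining regularized value functions $V^{\Delta,k}$ (uniformly bounded by $\lambda k^{\alpha-1}/r$) and the PDE \eqref{eq:const-fopde}, whose Hamiltonian comes from a \emph{compact} control set and is therefore Lipschitz and bounded in the gradient variable; the comparison theorem it cites explicitly requires this compactness, and only then does the Barles--Souganidis machinery yield $V^{\Delta,k}\to v^k$ locally uniformly. The singular limit is deferred to a second stage, $k\to\infty$, handled by a control-theoretic representation of $v^k$, lower semicontinuity of $X\mapsto\tau(X)$, monotonicity $v^k\uparrow v$, and Dini's theorem. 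If you want to rescue your single-limit route, note that your a priori bound already gives $\overline V\le v$ for free, so what remains is $\underline V\ge v$, i.e.\ a one-sided comparison of the supersolution $\underline V$ against the \emph{explicit} function $v$ rather than against an arbitrary subsolution; but even this requires a careful argument near $T=0$ and $x=0$ where $v$'s derivatives degenerate, and it is missing from the proposal.
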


\begin{proof}

\noindent Let us consider the regularized stochastic control problem
\begin{align}\label{defn:V-delta-eps}
V^{\Delta,k}(x,T) := \sup_{(s_t) \in \mathcal{S}^{k}_T} \E \left[ \sum_{i=1}^{x/\Delta} e^{-r \tau_i} \Delta  \cdot  s_{\tau_i}1_{\{\tau_i \leq T\}} \right], \qquad x \in \{0, \Delta, 2 \Delta, \cdots\},
\end{align}
where $\mathcal{S}^{k}_T:=\{s \in \mathcal{S}_T: s_t \in [1/k,k]\}$, $k>1$.
Using a representation similar to the one in \eqref{defn:V-delta} and using the lower bound on the controls $s \in \mathcal{S}^{k}_T$, it can be seen that
\begin{equation}\label{eq:bnd}
V^{\Delta,k}(x,T) \leq \frac{\lambda}{r}k^{\alpha-1}.
\end{equation}

We will follow the arguments of \cite{MR1115933} in the proof of their Theorem 2.1 (also see Theorem 4.1 on page 334 of \cite{MR2179357}) to show that $V^{\Delta,k}$ converges uniformly on compacts to the unique viscosity solution of
\begin{equation}\label{eq:const-fopde}
-v^{k}_T+\sup_{s \in [1/k,k]} \lambda \frac{s-v^k_x}{s^{\alpha}}-rv^k=0, \quad v^{k}(x,0)=0.
\end{equation}

Let $\bar{v}^k$ and $\underline{v}^k$ be defined by:
\[
\begin{split}
\bar{v}^k(x,T)&:=\limsup_{\delta \to 0} \limsup_{\Delta \to 0} \sup \left\{V^{\Delta,k}(y,S):|x-y|+|T-S| \leq \delta, \;y \in\{0, \Delta, \cdots\}  \right\},
\\ \underline{v}^k(x,T)&:=\liminf_{\delta \to 0} \liminf_{\Delta \to 0} \inf \left\{V^{\Delta,k}(y,S):|x-y|+|T-S| \leq \delta, \;y \in\{0, \Delta, \cdots\} \right\}.
\end{split}
\]
By definition we have that
$\underline{v}^k \leq v^k \leq \bar{v}^k$ and that $\underline{v}^k$ is lower semi-continuous, and $\bar{v}^k$ is upper semi-continuous; see e.g.~Proposition 5.2.1 of \cite{MR1484411}. We will show that $\bar{v}^k$ is a subsolution and that $\underline{v}^k$ is a supersolution of \eqref{eq:const-fopde}. It follows from Theorem 5.4.20 in \cite{MR1484411} that a comparison result holds for this PDE (the compactness of the control space is required in order to apply this result). This comparison theorem would then imply that $\bar{v}^k\leq \underline{v}^k$. As a result, $v^k=\bar{v}^k=\underline{v}^k$ is the unique continuous viscosity solution of \eqref{eq:const-fopde}. This fact together with the way the functions $\bar{v}^k$ and  $\underline{v}^k$ are defined also imply the local uniform convergence of $V^{\Delta,k}$ to $v^k$. (For a similar argument see page 35 of \cite{MR1118699}.)

We now prove that $\bar{v}^k$ is a viscosity subsolution of \eqref{eq:const-fopde}; the fact that $\underline{v}^k$ is a viscosity supersolution follows similarly.
Let $(x_0,T_0)$ be a local maximum of $\bar{v}^k-\phi$ for some test function $\phi \in C^{1,1}$. Without loss of generality, we will assume that $(x_0,T_0)$ is a strict local maximum and that $\bar{v}^k(x_0,T_0)=\phi(x_0,T_0)$, and $\phi \geq 2 \frac{\lambda}{r}k^{\alpha-1}$ outside the ball B$(x_0,T_0;\mathfrak{r})$, where $\mathfrak{r}>0$ is chosen so that
$(x_0,T_0)$ is the maximum of $\bar{v}^k-\phi$ on $B(x_0,T_0;\mathfrak{r})$. Thanks to the choice of the test function outside this ball, $(x_0,T_0)$ is in fact a global maximum of the function $\bar{v}^k-\phi$ and it is attained on $B(x_0,T_0;\mathfrak{r})$. (This is where the uniform boundedness assumption in \eqref{eq:bnd} is used.)

Let $(x^{\Delta}, T^{\Delta}) \in \{0,\Delta, \cdots\} \times \R_+ $ be a point at which $V^{\Delta,k}-\phi$ attains its (global) maximum. It follows from the definition of $\bar{v}^k$ and the fact that $(x_0,T_0)$ is a strict global maximum of $\bar{v}^k-\phi$ that there exists a sequence $\Delta_n \to 0$ such that $(x^{\Delta_n}, T^{\Delta_n}) \to (x_0,T_0)$, $V^{\Delta_n,k}-\phi$ attains its global maximum at that point and $V^{\Delta_n,k}(x^{\Delta_n}, T^{\Delta_n}) \to \bar{v}^k(x_0,T_0)$. From the global maximality
\[
 V^{\Delta_n,k}(x,T) -V^{\Delta_n,k}(x^{\Delta_n} ,T^{\Delta_n})\leq \phi(x,T) -\phi(x^{\Delta_n} ,T^{\Delta_n}).
\]
Moreover, it can be argued as in \cite{MR2519845} using the discrete dynamic programming principle (see \cite{MR511544}) that $V^{\Delta_n,k}$ satisfies
\[
-V^{\Delta_n,k}_T + \sup_{s \in [1/k,k]} \frac{\lambda} {s^\alpha \Delta_n} \left( V^{\Delta_n,k}(x^{\Delta_n}-\Delta_n ,T^{\Delta_n})-V^{\Delta_n,k}(x^{\Delta_n} ,T^{\Delta_n})+s \Delta_n \right) - rV^{\Delta_n,k} = 0
\]
in the viscosity sense. Then
\[
-\phi_T + \sup_{s \in [1/k,k]} \frac{\lambda} {s^\alpha \Delta_n} \left( \phi(x^{\Delta_n}-\Delta_n ,T^{\Delta_n})-\phi(x^{\Delta_n} ,T^{\Delta_n})+s \Delta_n \right) - r\phi+r(\phi-V^{\Delta_n,k}) \geq 0.
\]
Taking the limit as $\Delta_n \to 0$ we obtain from this equation that
\[
-\phi_T (x_0,T_0) + \sup_{s \in [1/k,k]} \frac{\lambda} {s^\alpha}( s-\phi_x(x_0,T_0)) - r\phi (x_0,T_0) \geq  0,
\]
which proves the subsolution property of $\bar{v}^{k}$. Here, we exchange the limit in $\Delta_n$ and the supremum with respect to $s$ using Proposition 7.32 in \cite{MR511544} which we can apply thanks to the compactness of the control space.

It follows again from Theorem 5.4.20 in \cite{MR1484411} that the unique solution of \eqref{eq:const-fopde} is given by
\begin{equation}\label{eq:aux-stcp}
v^k(x,T)=\sup_{(s_t) \in \mathcal{S}_T} \int_0^{T \wedge \tau(X^{(k),x})}\frac{\lambda e^{-rt}}{((s_t \vee 1/k) \wedge k)^{\alpha-1}} dt,
\end{equation}
where $dX^{(k),x}_t=-\lambda/((s_t \vee 1/k) \wedge k)^{\alpha}dt$, $X^{(k),x}_0=x$.
We will show that $v^{k}$ converges pointwise to $v$:
\[
\begin{split}
\lim_{k \to \infty}v^{k}(x,T)&=\sup_{k}\sup_{(s_t) \in \mathcal{S}_T} \int_0^{T \wedge \tau(X^{(k),x})}\frac{\lambda e^{-rt}}{((s_t \vee 1/k) \wedge k)^{\alpha-1}} \, dt \\ & =\sup_{(s_t) \in \mathcal{S}_T} \sup_{k}\int_0^{T \wedge \tau(X^{(k),x})}\negmedspace \frac{\lambda e^{-rt}}{((s_t \vee 1/k) \wedge k)^{\alpha-1}} \, dt
\geq \sup_{(s_t) \in \mathcal{S}_T}\int_0^{T \wedge \tau(X)}\!\frac{\lambda e^{-rt}}{s_t ^{\alpha-1}} \, dt=v(x,T),
\end{split}
\]
where the inequality follows from the lower semi-continuity of the map $X \mapsto \tau(X)$; see Lemma 5 in \cite{day10}. On the other hand, since $v$ is a supersolution of \eqref{eq:const-fopde}, the comparison result Theorem 5.4.20 in \cite{MR1484411} implies that $v^k \leq v$ for each $k$, and as a result
\[
\lim_{k \to \infty}v^{k}(x,T) \leq v(x,T).
\]
Combining the last two inequalities, we obtain the pointwise convergence of $v^k$ to $v$.
Pointwise convergence, on the other hand, implies uniform convergence on compacts due to Dini's theorem, since we already know that $v$ is a continuous function of its arguments, and that $v^{k}$ is an increasing sequence of functions. The latter fact follows from the fact that $v^{k+1}$ is a supersolution of the PDE $v^k$ satisfies.
\end{proof}

\begin{rem}
Results somewhat similar to Theorem~\ref{thm:fir} appeared in \cite{BauerleAAP00,Bauerle01,BauerleAAP02,day10,PiunovskiyMMOR09,Piunovskiy11} which are on the optimal control of queueing networks. (Among these papers only \cite{day10} considered optimal time-to-empty queueing control problems.) To prove Theorem~\ref{thm:fir} we used a completely different approach than the above literature, which had relied on probabilistic arguments. Our approach relies in contrast on the analytical approximation ideas of \cite{MR1115933}. We see the prelimit control problem as the discretization (only in the space variable but not in the time variable) of the ``fluid limit'' first order non-linear PDE \eqref{eq:ffopde} and rely on convergence of the approximation schemes to the viscosity solutions of such  non-linear PDEs. This approach could be fruitful in general in proving ``fluid limit'' results associated to controlled queueing networks.
\end{rem}

The following is a strengthening of Theorem~\ref{thm:fir} which is an interesting result in its own right.
\begin{prop}\label{prop:VDinctov}
For any sequence $(\Delta_k)$ with $\Delta_k = \delta 2^{-k}$, we have $V^{\Delta_k}\uparrow v$ as $k \to \infty$.
\end{prop}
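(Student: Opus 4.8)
The plan is to deduce the claim from Theorem~\ref{thm:fir} together with a one-step monotonicity statement along the dyadic refinement. Because $\Delta_{k+1}=\Delta_k/2$, the grids are nested, $\Delta_k\mathbb{Z}_{\geq 0}\subset \Delta_{k+1}\mathbb{Z}_{\geq 0}$, so it suffices to prove $V^{\Delta_{k+1}}(x,T)\geq V^{\Delta_k}(x,T)$ for every $x$ on the coarse grid. Theorem~\ref{thm:fir} already gives $\lim_k V^{\Delta_k}=v$ locally uniformly; a nondecreasing sequence with this limit necessarily approaches it from below, which is exactly $V^{\Delta_k}\uparrow v$.

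To isolate the monotonicity I would use the explicit product structure. Repeating the computation behind Proposition~\ref{Prop:1} with the rescaled intensity $\Lambda(s)/\Delta$ shows that \eqref{eq:disinn} is solved by $V^{\Delta}(x,T)=g_\Delta(x)\,(1-e^{-r\alpha T})^{1/\alpha}$, where the spatial profile $g_\Delta$ lives on $\Delta\mathbb{Z}_{\geq 0}$, vanishes at $0$, is increasing, and satisfies
\begin{equation}\label{eq:profile-rec}
r\,g_\Delta(x)=\lambda A_\alpha\Bigl(\tfrac{g_\Delta(x)-g_\Delta(x-\Delta)}{\Delta}\Bigr)^{1-\alpha},\qquad x\in\Delta\mathbb{Z}_{> 0}.
\end{equation}
Since the common time factor $(1-e^{-r\alpha T})^{1/\alpha}$ is nonnegative, $V^{\Delta_{k+1}}\geq V^{\Delta_k}$ is equivalent to $g_{\Delta_{k+1}}\geq g_{\Delta_k}$ on the coarse grid; likewise $v(x,T)=g_\infty(x)(1-e^{-r\alpha T})^{1/\alpha}$ with $g_\infty(x)=(\lambda/(r\alpha))^{1/\alpha}x^{(\alpha-1)/\alpha}$ from \eqref{eq:toy-pde-sol}, so the whole statement reduces to $g_{\Delta_k}\uparrow g_\infty$.

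The core is a discrete comparison principle for \eqref{eq:profile-rec}. Exactly as in Remark~\ref{rem:cnv}, the recursion forces each $g_\Delta$ to have decreasing increments, i.e.\ to be concave along its grid. Writing $\Delta'=\Delta/2$ and fixing a coarse point $x$, concavity of $g_{\Delta'}$ gives the quotient comparison $\frac{g_{\Delta'}(x)-g_{\Delta'}(x-\Delta)}{\Delta}\geq \frac{g_{\Delta'}(x)-g_{\Delta'}(x-\Delta')}{\Delta'}$ (a wider backward step yields a larger difference quotient for a concave function). As $u\mapsto u^{1-\alpha}$ is decreasing for $\alpha>1$, substituting this into the coarse operator and invoking \eqref{eq:profile-rec} at scale $\Delta'$ produces the supersolution inequality $r\,g_{\Delta'}(x)\geq \lambda A_\alpha\bigl(\frac{g_{\Delta'}(x)-g_{\Delta'}(x-\Delta)}{\Delta}\bigr)^{1-\alpha}$ on the coarse grid. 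A short induction on the coarse index then compares this supersolution with the exact solution $g_\Delta$: with $F(p,q):=\lambda A_\alpha((p-q)/\Delta)^{1-\alpha}$, which is decreasing in $p$ and increasing in $q$ on $\{p>q\}$, the hypothesis $g_{\Delta'}(x-\Delta)\geq g_\Delta(x-\Delta)$ together with $r\,g_{\Delta'}(x)\geq F(g_{\Delta'}(x),g_{\Delta'}(x-\Delta))$ and $r\,g_\Delta(x)=F(g_\Delta(x),g_\Delta(x-\Delta))$ forces $g_{\Delta'}(x)\geq g_\Delta(x)$ (positivity of the increments makes $F$ well defined throughout).

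I expect the sign bookkeeping in this last step to be the main obstacle: one must verify that refinement raises the value, and the mechanism is subtle because the decreasing map $u\mapsto u^{1-\alpha}$ reverses the quotient inequality before the monotonicities of $F$ are combined. A model-free alternative, useful if one wanted the analogue of Proposition~\ref{prop:VDinctov} for a general $\Lambda$, is to check directly from \eqref{eq:disinn} that $V^{\Delta/2}$ is a viscosity supersolution of the step-$\Delta$ HJB equation and then run a verification argument: under any admissible $(s_t)$ the process $e^{-rt}V^{\Delta/2}(X^{\Delta}_t,T-t)+\int_0^t e^{-ru}s_u\Lambda(s_u)\,du$ is a supermartingale, whence $V^{\Delta/2}\geq V^{\Delta}$.
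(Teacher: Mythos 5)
Your proof is correct, but it proves the key monotonicity $V^{\Delta/2}\geq V^{\Delta}$ by a genuinely different mechanism than the paper. Both arguments use the factorization of Proposition~\ref{Prop:1} to strip out the time variable and reduce everything to the stationary spatial profile (your $g_\Delta$, the paper's infinite-horizon value). From there the paper argues probabilistically: it takes an $\eps$-optimal strategy $s^{2\Delta}$ for the coarse investor and constructs a replicating strategy for the fine investor, who uses one and the same spread $u$ for two consecutive half-size trades, with $u$ calibrated so that the Laplace transform at $r$ of the two-trade duration equals that of the single coarse trade; an explicit power-law computation ($2z^2 s_{2n}\leq z(1+z)u$ with $z=2\La(u)/(2\La(u)+r)$) then shows the fine investor collects at least as much revenue, and an induction on the inventory level finishes the comparison. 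You instead argue analytically on the recursion \eqref{eq:profile-rec} (which is \eqref{eq:power-c-recursion} rescaled, and is derived correctly): concavity of the fine profile (exactly the Remark~\ref{rem:cnv} argument) makes the wide backward difference quotient dominate the narrow one, the decreasing map $u\mapsto u^{1-\alpha}$ turns this into the statement that $g_{\Delta/2}$ is a supersolution of the coarse-scale recursion, and the monotonicity of $F(p,q)=\lambda A_\alpha((p-q)/\Delta)^{1-\alpha}$ (decreasing in $p$, increasing in $q$) yields the comparison by induction from $g_{\Delta/2}(0)=g_\Delta(0)=0$; your induction step is sound, since $p\mapsto rp-F(p,q)$ is strictly increasing. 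What each approach buys: the paper's construction exhibits an explicit dominating policy and so explains \emph{why} refinement helps (finer trading can mimic the coarse timing while extracting more spread), whereas yours is shorter, leans on structure already established in the paper, and fits the viscosity/comparison theme of Section~\ref{sec:fluid}. One caveat on your closing remark: the ``model-free'' route (showing $V^{\Delta/2}$ is a supersolution of the step-$\Delta$ equation \eqref{eq:disinn} and invoking verification) is not actually free of model assumptions, because the supersolution property again requires concavity of $V^{\Delta/2}$ in $x$, which for general $\La$ holds only under hypotheses such as \eqref{eq:cond-on-L} of Theorem~\ref{thm:general-lob}.
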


\begin{proof}
We show that for any $\Delta>0$, $V^{2 \Delta} \leq V^{\Delta}$.  Due to the factoring of $T$ and $x$ in Proposition \ref{Prop:1}, it suffices to establish this result on the infinite horizon where strategies are constant between trading times.

Fix $\eps>0$ and let $s^{2\Delta}$ be an $\eps$-optimal strategy for $V^{2 \Delta}$. This policy is defined over $x \in \{0,2 \Delta, 4 \Delta, \cdots\}$.
We will recursively construct a policy $s^\Delta$ over the domain $x \in \{0,\Delta, 2\Delta, \cdots\}$ that outperforms $s^{2\Delta}$. The dynamic programming principle implies that
\[
\begin{split}
V^{2\Delta}(2n \Delta) &\le \E \Bigl[ e^{-r \tau_1}[2 s^{2\Delta}(2n \Delta) \Delta + V^{2\Delta}((2n-2)\Delta)\Bigr] + \eps
\\&= \frac{\La(s^{2\Delta}(2n \Delta))}{\La(s^{2\Delta}(2n \Delta))+r} \left\{2 s^{2\Delta}(2n \Delta) \Delta + V^{2\Delta}((2n-2)\Delta) \right\} + \eps.
\end{split}
\]
Similarly, given the liquidation strategy $s^{\Delta}$ and corresponding trading times $\tilde{\tau}_i$, the resulting expected profits denoted as $\tilde{V}^\Delta(x)$, $x\in \{0,\Delta, 2\Delta,\cdots\}$ satisfy for $y=2n \Delta$,
\begin{align}\notag
&\tilde{V}^\Delta( y) = \E \left[ e^{-r \tilde{\tau}_1}  s^{\Delta}(y)\Delta + e^{-r \tilde{\tau}_2}\left\{ s^{\Delta}( y-\Delta) \Delta + \tilde{V}^\Delta( y-2\Delta) \right\}  \right] \\  \notag
& \qquad = \frac{ 2\La(s^{\Delta}(y))}{2\La(s^{\Delta}(y) )+r} \left\{ s^\Delta(y) \Delta + \frac{ 2\La(s^{\Delta}( y- \Delta) )}{2\La(s^{\Delta}( y-\Delta)) +r} [ s^\Delta(y-\Delta) \Delta + \tilde{V}^\Delta( y-2\Delta)] \right\}.
\end{align}
Given $s_{2n} \equiv s^{2\Delta}(2n \Delta)$ we prove below that there exists $u \in \R_+$, such that
\begin{align}\label{eq:prop-2-2-ineq}
\frac{ \La(s_{2n})}{\La(s_{2n})+r} \left( 2 s_{2n} \Delta + V \right) \le \frac{ 2\La(u)}{2\La(u) + r} \left\{ u \Delta + \frac{2\La(u)}{2\La(u)+r}( u \Delta + V) \right\} ,
\end{align}
for any $V \ge 0$. This would establish $V^\Delta(2n \Delta) \ge \tilde{V}^\Delta(2n \Delta) \ge V^{2\Delta}(2n\Delta) -\eps$ by induction on $n$ after setting $s^{\Delta}(2n \Delta) = s^{\Delta}((2n-1)\Delta) = u$. Since $\eps$ is arbitrary, the statement of the proposition would then follow. Note that in the above construction, the $\Delta$-investor trading in smaller increments and twice as much, uses the \emph{same} spread $u$ to trade when her inventory is $2n\Delta$ or $(2n-1)\Delta$.

Let $z^2 := \frac{ \La(s_{2n})}{\La(s_{2n})+r}$ and define $u$ implicitly through $\frac{2\La(u)}{2\La(u)+r} := z < 1$.
Solving for $s_{2n}$ and $u$ in terms of $z$ and using $\La(s) = \lambda s^{-\alpha}$ we obtain
\[
s_{2n} = \left(\frac{\lambda(1-z^2)}{r z^2} \right)^{\alpha^{-1}} > \left( \frac{2 \lambda (1-z)}{r z } \right)^{\alpha^{-1}} = u.
\]
Observe that by construction
\[
\frac{ \La(s_{2n})}{\La(s_{2n})+r} =  \frac{ 4\La(u)^2}{(2\La(u) + r)^2},
\]
so that the Laplace transform at $r$ of the duration to execute two trades by the $\Delta$-investor is equal to the Laplace transform at $r$ of the duration to execute one trade by the $2\Delta$-investor.  Using this fact, \eqref{eq:prop-2-2-ineq} is equivalent to
\begin{align*}
\frac{2 \La(s_{2n})}{\La(s_{2n})+r} s_{2n} &\le  \frac{ 2\La(u)}{2\La(u) + r} \left\{ u + \frac{2\La(u)}{2\La(u)+r} u \right\} \\
\Longleftrightarrow \qquad 2 z^2 s_{2n} &\le z(1+z) u.
\end{align*}
Since $\alpha > 1$ and the terms on both sides of the above inequality are positive, we may raise both sides to the $\alpha$-power and plug-in the expressions for $s_{2n}$ and $u$ to find
\begin{align*}
(2 z^2)^\alpha s_{2n}^\alpha - z^\alpha (1+z)^\alpha u^\alpha & = (2 z^2)^\alpha \frac{\lambda(1-z^2)}{r z^2} - z^\alpha (1+z)^\alpha  \frac{2 \lambda (1-z)}{r z } \\
& = 2 \lambda r^{-1} (1-z)(1+z)z^{\alpha-1}  [ (2z)^{\alpha-1} - (1+z)^{\alpha-1}] < 0,
\end{align*}
where the last inequality follows since $z<1$ and $\alpha >1$. This shows that \eqref{eq:prop-2-2-ineq}  holds and concludes the proof of the proposition.
\end{proof}

Next, we show that the strategies also converge thanks to the concavity of all the functions involved.

\begin{cor}\label{cor:fc}
Let us denote by $s^{(\Delta)}$ the pointwise optimizer in \eqref{eq:disinn}. Then we have that $s^{(\Delta)}(x,T) \to s^{(0)}(x,T)$. (Here, $x$ is fixed and we take the limit over the grids that pass through $x$.)
\end{cor}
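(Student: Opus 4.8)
The plan is to first make both optimizers explicit and thereby reduce the statement to the convergence of a single difference quotient. Carrying out the first-order condition in the discrete Hamiltonian appearing in \eqref{eq:disinn} (exactly as in the computation recorded in Remark \ref{rem:cnv}) yields
\[
s^{(\Delta)}(x,T) = \frac{\alpha}{\alpha-1}\cdot\frac{V^{\Delta}(x,T)-V^{\Delta}(x-\Delta,T)}{\Delta},
\]
while the first-order condition in \eqref{eq:ffopde} gives $s^{(0)}(x,T)=\frac{\alpha}{\alpha-1}v_x(x,T)$, consistent with the explicit formula following \eqref{eq:toy-pde-sol}. Because $\alpha>1$ and, for $x>0$, the relevant slopes are strictly positive and finite, the Hamiltonian tends to $-\infty$ as $s\downarrow 0$ and to $0$ as $s\uparrow\infty$, so both suprema are attained in the interior at these values. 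Hence it suffices to prove that the left difference quotient $D^{\Delta}(x):=\Delta^{-1}\bigl(V^{\Delta}(x,T)-V^{\Delta}(x-\Delta,T)\bigr)$ converges to $v_x(x,T)$ as $\Delta\to0$ along grids through $x$.

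Next I would isolate the two concavity facts that drive everything. By repeating the monotone-difference argument of Remark \ref{rem:cnv} for the $\Delta$-recursion satisfied by $V^{\Delta}$, the linear interpolation $g^{\Delta}:=V^{\Delta}(\cdot,T)$ is concave in $x$; and the explicit form \eqref{eq:toy-pde-sol} shows that $v(\cdot,T)$ is concave and $C^{1}$ on $(0,\infty)$. For a concave function chord slopes decrease as the interval is moved to the right, so fixing $\eta>0$ and taking $\Delta<\eta$, the one-cell slope $D^{\Delta}(x)$ is trapped between the slopes over the fixed-width windows $[x,x+\eta]$ and $[x-\eta,x-\Delta]$:
\[
\frac{g^{\Delta}(x+\eta)-g^{\Delta}(x)}{\eta}\ \le\ D^{\Delta}(x)\ \le\ \frac{g^{\Delta}(x-\Delta)-g^{\Delta}(x-\eta)}{\eta-\Delta}.
\]

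Finally I would pass to the limit in two stages. Letting $\Delta\to0$ and invoking the uniform-on-compacts convergence $V^{\Delta}\to v$ of Theorem \ref{thm:fir} (together with continuity of $v$, so that $g^{\Delta}(x-\Delta)\to v(x,T)$), the two outer terms converge to the forward and backward difference quotients of $v$, giving
\[
\frac{v(x+\eta,T)-v(x,T)}{\eta}\ \le\ \liminf_{\Delta}D^{\Delta}(x)\ \le\ \limsup_{\Delta}D^{\Delta}(x)\ \le\ \frac{v(x,T)-v(x-\eta,T)}{\eta}.
\]
Then letting $\eta\to0$ and using differentiability of $v$ in $x$, both bounds collapse to $v_x(x,T)$, whence $D^{\Delta}(x)\to v_x(x,T)$ and therefore $s^{(\Delta)}(x,T)\to s^{(0)}(x,T)$.

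I expect the principal obstacle to be precisely that uniform convergence of the value functions carries, in general, no information about their slopes; the crux is to use concavity to trap the single-cell difference quotient between chord slopes over a window of \emph{fixed} width and only afterwards shrink the window. A secondary point to check carefully is the concavity of $V^{\Delta}$ in $x$ on the refined grid with intensity $\Lambda/\Delta$, which follows by the monotone-difference argument of Remark \ref{rem:cnv} applied to the $\Delta$-recursion, and the verification that the interior critical point is indeed the maximizer (which is where $\alpha>1$ enters), so that the displayed formula for $s^{(\Delta)}$ is legitimate.
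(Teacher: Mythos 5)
Your proposal is correct, and its skeleton coincides with the paper's: both arguments identify $s^{(\Delta)}$, via the first-order condition as in Remark \ref{rem:cnv}, with $\frac{\alpha}{\alpha-1}$ times the left difference quotient of (the linear interpolation of) $V^{\Delta}$, identify $s^{(0)}$ with $\frac{\alpha}{\alpha-1}v_x$, invoke concavity in $x$ of both objects, and then reduce the corollary to the fact that slopes of concave functions converge when the functions do. The difference lies in how that last step is discharged. The paper cites Theorem 24.5 of Rockafellar \cite{MR1451876}: if concave functions converge pointwise to a differentiable concave function, their (left) derivatives converge to its derivative, locally uniformly. You instead prove the one-dimensional instance of this theorem by hand, trapping the single-cell quotient between chord slopes over windows of fixed width $\eta$, sending $\Delta\to0$ via the uniform convergence of Theorem \ref{thm:fir}, and only then sending $\eta\to0$ via differentiability of $v$. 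This sandwich is essentially the textbook proof of the cited theorem, so nothing is lost; your version is self-contained and makes explicit the point the paper buries in the citation, namely that uniform convergence of value functions alone carries no information about slopes and concavity is what rescues the argument. What the citation buys in exchange is brevity and the local uniformity in $x$ of the convergence $s^{(\Delta)}\to s^{(0)}$, which your pointwise statement omits (though your bounds are in fact locally uniform in $x$, so it could be extracted). Two cosmetic points to tighten: the endpoints $x+\eta$ and $x-\eta$ need not be grid points, so the chord bounds should be stated for the linear interpolation $\hat{V}^{\Delta}$, which is where concavity lives and which inherits uniform-on-compacts convergence to $v$; and you need $\eta<x$ so that the left window stays inside the domain.
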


\begin{proof}
On the one hand, as in Remark~\ref{rem:cnv} we can think of  $s^{(\Delta)}$ as something proportional to the left derivative of the linear interpolation of $V^{\Delta}$, denoted by $\hat{V}^{\Delta}$,  which is increasing and concave. On the other hand, $s^{(0)}$ is proportional to the derivative of the concave differentiable function $v$. By Theorem 24.5 on page 233 of \cite{MR1451876} it however follows that for any $x > 0$,
\[
\begin{split}
|D^{-}_{x}&\hat{V}^{\Delta}(x,T)-v_x(x,T)| \leq \varepsilon
\end{split}
\]
for small enough $\Delta$, where $D^{-}_{x}$ denotes the left derivative operator with respect to $x$.
\end{proof}

\begin{rem}\label{rem:asyp}
Theorem~\ref{thm:fir} tells us about the asymptotics of $c_n$ in \eqref{eq:power-c-recursion}:
\[
c_n \sim \left(\frac{\lambda}{r \alpha}\right)^{1/\alpha}n^{(\alpha-1)/\alpha} \quad \text{as $n \to \infty$}.
\]
Corollary~\ref{cor:fc} can be used to find out the marginal price asymptotics in Proposition~\ref{Prop:1}:
\begin{equation}\label{eq:mpa1}
s^*(n,T) \sim \left(\frac{\lambda}{\alpha r}\right)^{1/\alpha}\frac{1}{n^{1/\alpha}} \quad \text{as $n \to \infty$}.
\end{equation}
Clearly, the spread will go to zero as $n \to \infty$. But here we are able to obtain the rate of convergence to zero as  a function of the remaining inventory.
\end{rem}


For time to execution on infinite horizon we have for $\tau_1 = \inf\{ t: X_t \le x_1 \}$ and $S(x_1, x_2) := \E[ \tau_1  | X_0 = x_2]$ that
\begin{align}
S(x_1,x_2) = \int_{x_1}^{x_2} \frac{1}{\La(s^{(0)}(u))} \,du,
\end{align}
since intuitively when inventory is of size $u$, the expected time to liquidate an infinitesimal quantity $du$ is inversely proportional to the current trading rate $\La(s^{(0)}(u))$. Plugging in $s^{(0)}(u) = (\frac{\la}{\alpha r u})^{1/\alpha}$ we obtain $\La(s^{(0)}(u)) = \alpha r u$ or
$S(x_1, x_2) = \frac{1}{\alpha r} \log (\frac{ x_2}{x_1} )$ which shows that orders are filled in logarithmic time (as $x_1 \to 0$ the remainder is executed arbitrarily slow).

\section{Exponential-Decay Order Books}\label{sec:exp-law}
The power-law order book implies that trades can be made arbitrarily quickly as the spread goes to zero: $\lim_{s \to 0} \La(s) = +\infty$. Also, it gives a relatively good chance of executing trades deep in the book, i.e.~when $s$ is large. For less liquid markets, both of these features might not be realistic. Accordingly, we consider an exponential-decay LOB, with
\begin{align}\label{eq:exp-lambda}
\La(s) = \la e^{- \kappa s}, \quad \kappa > 0,
\end{align}
where $\kappa$ controls the exponential depth of the book and $\la = \La(0)$ is the order intensity at the bid price. The optimization problem for the spread is now of the form
$$
\sup_{s \ge 0} \, \la e^{-\kappa s}\bigl( V(n-1)-V(n) + s \bigr),
$$
which leads to the candidate optimizer $s^*(n) = \frac{1}{\kappa} + (V(n) - V({n-1}))$. We observe that $s^*$ is bounded away from zero so no trades are ever placed close to the bid.

\subsection{Finite Horizon}
With a finite horizon and no discounting we obtain the following closed-form solutions to the execution problem.
\begin{prop}
Consider again $V^\Delta(x,T)$ defined in \eqref{defn:V-delta} with boundary condition $V^\Delta(x,0) =V^{\Delta}(0,T)=0$, $r=0$ and $\Lambda(s)$ given in \eqref{eq:exp-lambda}. Then for $x=n \Delta$,
\begin{align}\label{eq:Vn-exp-closed-form}
V^\Delta(x ,T)= \frac{\Delta}{\kappa} \log \left( \sum_{j=0}^n \frac{1}{j!} \left(\frac{ \lambda T}{\Delta e}\right)^j \right),
\end{align}
and
\begin{align*}
s^*(n\Delta, T) = \frac{1}{\kappa} \left( 1 + \log \left( 1 +  \frac{\frac{ (\lambda T)^n}{ (\Delta e)^n n!}}{\sum_{j=0}^{n-1} \frac{(\lambda T)^j}{(\Delta e)^j j!} } \right) \right).
\end{align*}
As $\Delta \searrow 0$, $V^\Delta(n\Delta,T) \to v(x,T)$ uniformly on compacts, where $v(x,T)$ solves the nonlinear first order PDE
\begin{equation}\label{eq:exp-finh-pde}
v_T(x,T) = \frac{\lambda}{\kappa} e^{-1 - \kappa v_x(x,T)},
\end{equation}
with boundary conditions  $v(0,T)=v(x,0)=0$.
The solution to this PDE satisfies
\begin{equation}\label{eq:bounds}
\frac{x}{\kappa} \log\left(\frac{\lambda}{x}T\right) \leq v(x,T)  \leq \frac{\lambda}{\kappa e} T.
\end{equation}
\end{prop}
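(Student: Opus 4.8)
The plan is to treat the three assertions in turn: the explicit formulas, the convergence to the PDE, and the two bounds. The engine for the first (and most substantial) part is an exact linearization of the discrete HJB system.

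First I would specialize \eqref{eq:disinn} to $r=0$ and $\La(s)=\la e^{-\kappa s}$. Carrying out the inner maximization, the first-order condition gives the candidate optimizer $s^*(n\Delta,T)=\frac1\kappa+\frac1\Delta\bigl(V^\Delta(n\Delta,T)-V^\Delta((n-1)\Delta,T)\bigr)$, and substituting it back collapses the equation to the autonomous cascade of ODEs
\[
V^\Delta_T(n\Delta,T)=\frac{\la}{\kappa e}\exp\!\left(-\frac{\kappa}{\Delta}\bigl(V^\Delta(n\Delta,T)-V^\Delta((n-1)\Delta,T)\bigr)\right),\qquad V^\Delta(n\Delta,0)=0.
\]
The decisive step is the substitution $g_n(T):=\exp\bigl(\tfrac{\kappa}{\Delta}V^\Delta(n\Delta,T)\bigr)$, which turns this nonlinear coupled system into the linear triangular recursion $g_n'(T)=\tfrac{\la}{\Delta e}\,g_{n-1}(T)$ with $g_0\equiv 1$ and $g_n(0)=1$. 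Integrating successively produces $g_n(T)=\sum_{j=0}^n\frac1{j!}\bigl(\tfrac{\la T}{\Delta e}\bigr)^j$, and inverting the substitution yields \eqref{eq:Vn-exp-closed-form}; the expression for $s^*$ then follows by writing the increment of $V^\Delta$ as $\tfrac{\Delta}{\kappa}\log(g_n/g_{n-1})$ and simplifying the ratio of partial sums. A verification argument, as in Proposition~\ref{Prop:1}, confirms this candidate is the value function.

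For the convergence $V^\Delta\to v$ I would rerun the viscosity scheme of Theorem~\ref{thm:fir} with $\La(s)=\la e^{-\kappa s}$ in place of the power law: the reward rate $\la s\,e^{-\kappa s}$ is bounded by $\tfrac{\la}{\kappa e}$, so the analogue of the uniform bound \eqref{eq:bnd} holds on compact time sets even with $r=0$, and the half-relaxed-limit and comparison argument then identifies the formal $\Delta\to0$ limit of the inner maximization as exactly \eqref{eq:exp-finh-pde}, with uniform convergence on compacts. The two bounds \eqref{eq:bounds} come out cleanly afterwards. Since $v$ is increasing in inventory, $v_x\ge0$, so \eqref{eq:exp-finh-pde} gives $v_T=\frac{\la}{\kappa}e^{-1-\kappa v_x}\le\frac{\la}{\kappa e}$, and integrating in $T$ from $v(x,0)=0$ yields the upper bound. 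For the lower bound I would use the deterministic control representation analogous to \eqref{eq:det-val-func}, namely $v(x,T)=\sup_{(s_t)}\int_0^{T\wedge\tau}\la s_t e^{-\kappa s_t}\,dt$ with $\dot X_t=-\la e^{-\kappa s_t}$, and simply test the constant spread $s\equiv\frac1\kappa\log(\frac{\la T}{x})$: this empties the position precisely at time $T$ and earns $sx=\frac{x}{\kappa}\log(\frac{\la T}{x})$.

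The hard part will be the convergence step. One must confirm that the comparison result (Theorem 5.4.20 of \cite{MR1484411}) and the $[1/k,k]$ regularization of Theorem~\ref{thm:fir} genuinely carry over to the exponential Hamiltonian and that the optimizer remains in a compact set after regularization. Alternatively, if one passes to the limit directly in the closed form \eqref{eq:Vn-exp-closed-form}, the delicate issue is that both the number of summands $n=x/\Delta$ and the argument $\frac{\la T}{\Delta e}$ diverge together; a Laplace/Stirling analysis of the truncated exponential series is then required, and it exhibits the two regimes $x\lessgtr\frac{\la T}{e}$ in which $v$ saturates the lower and the upper bound of \eqref{eq:bounds}, respectively. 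By contrast, the linearizing substitution in the first stage and the two bound computations are entirely routine once set up.
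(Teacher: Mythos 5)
Your proposal is correct, and on the core assertions it follows the paper's own route: the closed form comes from solving the cascade of ODEs produced by the inner maximization, and your substitution $g_n=\exp\bigl(\kappa V^\Delta(n\Delta,\cdot)/\Delta\bigr)$, which linearizes the cascade to $g_n'=\tfrac{\la}{\Delta e}g_{n-1}$, is exactly what the paper's ``iterating the separable ODE'' amounts to, just made explicit. (As a bonus, your recursion carries the correct prefactor $\tfrac{\la}{\kappa}e^{-1}$; the paper's displayed \eqref{eq:Vn-exp-recursion} has a spurious extra $\Delta^{-1}$ that is inconsistent with its own $n=1$ solution and with \eqref{eq:Vn-exp-closed-form}, so your version is the consistent one.) The convergence step is deferred to the machinery of Theorem~\ref{thm:fir} in both treatments, and your observation that with $r=0$ the uniform bound must come from $\sup_{s\ge 0}\la s e^{-\kappa s}=\la/(\kappa e)$ times the horizon, rather than from \eqref{eq:bnd}, fills in a point the paper leaves implicit. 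The upper bound in \eqref{eq:bounds} is obtained identically ($v_x\ge 0$, integrate $v_T\le \la/(\kappa e)$). The one place you genuinely diverge is the lower bound: the paper stays discrete, bounding $V^\Delta$ below by the single last term $\tfrac{\Delta}{\kappa}\log\bigl(\tfrac{1}{n!}(\tfrac{\la T}{\Delta e})^n\bigr)$ of the sum, applying Stirling to get $\sim \tfrac{x}{\kappa}\log(\la T/x)$, and letting $\Delta\downarrow 0$; you instead test the constant spread $s\equiv\tfrac1\kappa\log(\la T/x)$ in the fluid control representation, which empties the inventory exactly at $T$ and earns $\tfrac{x}{\kappa}\log(\la T/x)$. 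Your argument is shorter but rests on $v$ admitting the deterministic control representation analogous to \eqref{eq:det-val-func} (equivalently, on the comparison theorem, since your constant-spread value is precisely the subsolution $\tfrac{x}{\kappa}\log(\la T/x)$ of \eqref{eq:exp-finh-pde} with smaller boundary data that the paper itself mentions as an alternative proof); the paper's Stirling argument is more elementary and self-contained once the closed form \eqref{eq:Vn-exp-closed-form} is in hand. Either way, the bound is established, so this difference is one of taste rather than of substance.
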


\begin{rem}
In the above notation $x$ is the number of shares, which is fixed across the problems. When we are taking the continuous liquidation limit, we let $\Delta \downarrow 0$, the size of trading units,  while taking the number of units $n$ as $n\Delta = x$, for $x$ constant.
\end{rem}
\begin{proof}
$V^\Delta (n\Delta, T)$ satisfies the HJB equation
\begin{align}\notag
-\partial_T V^\Delta(n\Delta,T) + \sup_{s \ge 0} \la \Delta^{-1} e^{-\kappa s}( V^\Delta( (n-1)\Delta,T)-V^\Delta(n,T) + s\Delta) = 0, \end{align}
which can be written as
\begin{equation}
 \partial_T V^\Delta(n\Delta,T)  =  \frac{\la}{\kappa \Delta} \exp \left( -1+\Delta^{-1}\kappa(V^\Delta({(n-1)\Delta},T)-V^\Delta(n,T)) \right).
\label{eq:Vn-exp-recursion}
\end{equation}
Letting $B := \frac{\la}{\kappa e} $, integrating and using $V^\Delta(\cdot,0)\equiv 0$ we find for $n=1$ that
\begin{align}
V^{\Delta}(\Delta,T) = \frac{\Delta}{\kappa} \log (1+ B \kappa \Delta^{-1} T ).
\end{align}
Iterating over $n$ the separable ODE for $V^\Delta(n,\cdot)$ in \eqref{eq:Vn-exp-recursion} we obtain \eqref{eq:Vn-exp-closed-form}.
The expression for the optimal spread follows from $s^*(n\Delta, T) = \frac{1}{\kappa} + \frac{V^\Delta(n\Delta)-V^\Delta((n-1)\Delta)}{\Delta}$.

The proof that  as $\Delta \downarrow 0$, $V^\Delta(n\Delta,T) \to v(x,T)$ uniformly on compacts
can be proven as in Theorem~\ref{thm:fir}.

Observe that both bounds in \eqref{eq:bounds} satisfy \eqref{eq:exp-finh-pde}.
To prove the lower bound, let us introduce the following function:
\[
\tilde{V}^\Delta(x ,T)= \frac{\Delta}{\kappa} \log \left(\frac{1}{n!} \left(\frac{ \lambda T}{\Delta e}\right)^n \right).
\]
Clearly, $\tilde{V}^{\Delta} \leq V^{\Delta}$. From Stirling's formula we know that
\[
n! \sim \sqrt{2 \pi n} \left(\frac{n}{e}\right)^n,
\]
where we use $\sim$ to indicate that the ratio of the left to the right-hand-side converges to $1$ as $n \to \infty$. As a result, $\tilde{V}^{\Delta}(x,T) \sim x/\kappa \log(\lambda T/x)$, recalling that $x=n \Delta$. Now the lower bound in \eqref{eq:bounds} follows since $v(x,T) \geq \lim_{\Delta \downarrow 0}\tilde{V}^{\Delta}$. We could have provided an alternative proof using a comparison theorem for the first-order non-linear PDE  \eqref{eq:exp-finh-pde} since in fact  $x/\kappa \log(\lambda T/x)$ also satisfies this PDE with a smaller boundary value at $T=0$. We preferred to be more constructive in our proof.

The fact that $\lambda T/ (\kappa e)$ is an upper bound on $v$ follows directly from the observation that $v_T \leq \lambda/ (\kappa e)$ (recalling that $v$ is increasing in $x$) and that $v(x,0)=0$.

\end{proof}

\begin{rem}
Since the trading rate is bounded $\La(s^*) \le \la e^{-1}$, for $x > \la e^{-1} T$ the full inventory cannot be liquidated by horizon $T$. Therefore, in the region $\mathcal{D} := \{ (x,T) : x > \la e^{-1} T\}$, $v$ is independent of $x$ and the upper bound is tight: $v(x,T) = \frac{\la}{\kappa e} T$ on $\mathcal{D}$.
\end{rem}

\begin{rem}\label{rem:exp-exec-curve}
Here, we will determine the shape of $t \to X^{(0),x}_t$ in Remark~\ref{rem:trade-curve-sec3} for the exponential order books. First,
\begin{equation}\label{eq:dyoX}
d X^{(0),x}_t=-\Lambda(s(X^{(0),x}_t,T-t))dt,
\end{equation}
where $\Lambda$ is given by \eqref{eq:exp-lambda}. On the other hand, $s(X^{(0),x}_t,T-t)=\frac{1}{\kappa}+v_x(X^{(0),x}_t,T-t)$. Using this relationship, along with \eqref{eq:exp-finh-pde}, which implies that $t \mapsto s(X^{(0),x}_t,T-t)$ is a constant function (let us denote that value by $s^*$), it follows from \eqref{eq:dyoX} that
\[
\frac{d^2 X^{(0),x}_t}{dt^2}=0, \quad t \in [0,T],
\]
i.e., $t \mapsto X^{(0),x}_t$, $t \in [0,T]$, is a strictly decreasing linear function. In fact, one can compute $s^*$ by maximizing the value function \eqref{eq:det-val-func} (after replacing power rate with exponential) over constant spreads (since the optimal spread is known to be a constant). This yields that $s^*=\frac{1}{\kappa}\log\left(\frac{\lambda T}{x}\right)$ if $\lambda T/x \geq e$. Otherwise $s^{*}=1/\kappa$. The expression for the optimal spread and \eqref{eq:dyoX} in turn imply that if $x \leq \la e^{-1} T$,
$X^{(0),x}_t=x(1-\frac{t}{T})$ (which is 0 at $T$) and if $x > \la e^{-1} T$ we have that $X^{(0),x}_t=x-\lambda t/e$ (which remains strictly positive at $T$).
\end{rem}

\subsection{Infinite Horizon}
We also have closed-form expressions for the infinite horizon case.

\begin{prop}
For exponential-decay LOB with $T=+\infty$ and discounting rate $r >0$ we have
\begin{align}\label{eq:lambertw-inf-horizon}
V^\Delta(x) = \frac{\Delta}{\kappa} \bW\left( \la r^{-1} \Delta^{-1} \exp \Bigl(\kappa \frac{V^\Delta(x-\Delta)}{\Delta}-1 \Bigr) \right), \; x \in \{0, \Delta, \cdots\}, \qquad V^\Delta(0) = 0,
\end{align}
where $\bW$ is the Lambert-W function (or the double-log function defined as $z = \bW(y)$ for $z e^z = y$).

As $\Delta \downarrow 0$, $V^\Delta(x) \to v(x)$ uniformly on compacts where
\begin{align}\label{eq:ei-inf-horizon}
li\left( \frac{ e\kappa r v(x)}{\la} \right) & = -\frac{ e r x}{\la},
\end{align}
and
$li(y) := \int_{0}^y \frac{1}{\log t} \, dt$ is the logarithmic integral function.
\end{prop}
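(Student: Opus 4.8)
The plan is to handle the two assertions in turn: first derive the Lambert-W recursion \eqref{eq:lambertw-inf-horizon} for the discrete value function from the stationary HJB equation, then identify and solve the fluid ODE to obtain \eqref{eq:ei-inf-horizon}, and finally establish the convergence $V^\Delta\to v$ along the lines of Theorem~\ref{thm:fir}.

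For the recursion, I would set $V^\Delta_T=0$ in the $\Delta$-analogue of \eqref{eq:hjb-V} to obtain the stationary equation
\[
\sup_{s\ge 0}\frac{\lambda e^{-\kappa s}}{\Delta}\bigl(V^\Delta(x-\Delta)-V^\Delta(x)+s\Delta\bigr)-rV^\Delta(x)=0 .
\]
Writing $W:=V^\Delta(x)$ and $W_-:=V^\Delta(x-\Delta)$, the first-order condition gives the optimizer $s^*=\tfrac1\kappa+\tfrac{W-W_-}{\Delta}$ (matching the formula quoted just before the statement), at which $W_--W+s^*\Delta=\Delta/\kappa$. Substituting back collapses the equation to $\frac{\lambda}{\kappa e}e^{-\kappa(W-W_-)/\Delta}=rW$, i.e.
\[
\frac{\kappa W}{\Delta}\,e^{\kappa W/\Delta}=\frac{\lambda}{\Delta e r}\,e^{\kappa W_-/\Delta}.
\]
Recognizing the left-hand side as $y e^{y}$ with $y=\kappa W/\Delta$, the definition of the Lambert-W function yields \eqref{eq:lambertw-inf-horizon}. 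Because the argument of $\bW$ is positive, $\bW$ is evaluated on its principal branch and the recursion defines $V^\Delta$ uniquely from $V^\Delta(0)=0$; a standard verification argument (as indicated before Proposition~\ref{Prop:1}) confirms that this solution is indeed the value function.

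For the fluid limit I would first write the stationary HJB of the continuous problem, $\sup_{s\ge0}\lambda e^{-\kappa s}(s-v_x)-rv=0$, whose optimizer is $s^*=v_x+1/\kappa$; this reduces the equation to $\frac{\lambda}{\kappa e}e^{-\kappa v_x}=rv$, equivalently $v_x=-\frac1\kappa\log\!\big(\kappa e r v/\lambda\big)$. The latter is separable, and the substitution $w:=\kappa e r v/\lambda$ turns it into $dw/\log w=-(e r/\lambda)\,dx$. Integrating, with $\int dw/\log w = li(w)$ and the boundary value $v(0)=0$ (so $w=0$ and $li(0)=0$ fixes the constant), gives \eqref{eq:ei-inf-horizon}. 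One checks that the resulting $v$ is increasing, smooth on $(0,\infty)$ and bounded by $\lambda/(\kappa e r)$, with $v_x\to\infty$ as $x\to0^+$, so that it is a classical (hence viscosity) solution of the fluid HJB on the open half-line.

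Finally, the convergence $V^\Delta\to v$ uniformly on compacts follows the template of Theorem~\ref{thm:fir}, and its inputs carry over. The reward rate $s\Lambda(s)=\lambda s e^{-\kappa s}\le\lambda/(\kappa e)$ is bounded, which yields the uniform bound $V^\Delta\le\lambda/(\kappa e r)$ playing the role of \eqref{eq:bnd}. Viewing the stationary recursion as the spatial discretization of the fluid HJB, one restricts the spread to a compact set $[1/k,k]$ to invoke the comparison result (Theorem~5.4.20 of \cite{MR1484411}), shows that the upper and lower semilimits $\bar v$ and $\underline v$ of $V^{\Delta,k}$ are respectively a sub- and a supersolution, and concludes via comparison that $\bar v=\underline v=v^k$; letting $k\to\infty$ and arguing as in Theorem~\ref{thm:fir} gives the claim. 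I expect the main obstacle to be exactly this last step: proving the comparison principle for the stationary first-order HJB on the half-line in the presence of the degenerate boundary behaviour $v_x\to\infty$ at $x=0$, and checking that the explicit $li$-solution coincides with the limit identified by the semilimits. The discounting term $-rv$ with $r>0$ supplies the properness that makes comparison available, which is what allows the argument of Theorem~\ref{thm:fir} to be transplanted to the infinite-horizon setting.
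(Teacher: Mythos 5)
Your proposal is correct and follows essentially the same route as the paper: derive the stationary HJB recursion with optimizer $s^*=\tfrac1\kappa+\tfrac{V^\Delta(x)-V^\Delta(x-\Delta)}{\Delta}$, rewrite it in the form $ye^y=\cdot$ to get the Lambert-W formula, reduce the fluid HJB to the separable ODE $v_x=-\tfrac1\kappa\log(\kappa e r v/\lambda)$ solved by the logarithmic integral, and obtain convergence by transplanting the argument of Theorem~\ref{thm:fir}. (Your intermediate equation $\frac{\lambda}{\kappa e}e^{-\kappa(W-W_-)/\Delta}=rW$ is in fact cleaner than the paper's corresponding display, which carries a spurious factor of $\Delta$; both lead to the same final formula \eqref{eq:lambertw-inf-horizon}.)
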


\begin{proof}
The HJB equation for $V^\Delta(x)$ is
\begin{align*}
-r V^\Delta(x) + \sup_{s \ge 0} \la \Delta^{-1} e^{-\kappa s}\left( V^\Delta(x-\Delta)-V^\Delta(x) + s \Delta \right) & = 0.
\end{align*}
Using the optimizer $s^{(\Delta)} = \frac{1}{\kappa} + \frac{V^\Delta(x)-V^\Delta(x-\Delta)}{\Delta}$ we reduce to
\begin{align*} V^\Delta(x) & =  \frac{\la \Delta}{\kappa r} \exp \left( -1+\kappa\frac{V^\Delta({x-\Delta})-V^\Delta(x)}{\Delta} \right),
\end{align*}
which has closed-form solution given by \eqref{eq:lambertw-inf-horizon}. Arguments similar to  Theorem~\ref{thm:fir} imply that $V^\Delta \to v$ uniformly on compacts and the continuous inventory limit satisfies
\begin{align}
-r v(x) + \sup_{s \ge 0} \la e^{-\kappa s}(s - v'(x)) &= 0.
\end{align}
Solving for $v'$ we obtain
\begin{align*}
v'(x) & = -\frac{1}{\kappa} \left\{1 + \log \left(\frac{\kappa r v(x)}{\la} \right)\right\}. \end{align*}
The last nonlinear first-order ordinary differential equation has closed-form solution given in \eqref{eq:ei-inf-horizon}.
Asymptotically $\lim_{x \to \infty} v(x) = \frac{\la}{\kappa r e}$ and the optimal spread is
\begin{align}\label{eq:s0-exp-law}
s^{(0)}(x) = \frac{1}{\kappa} \left\{\log \left( \frac{\la}{\kappa r v(x)} \right) \right\}.
\end{align}
\end{proof}

We note that since $v$ is increasing in $x$, $x \mapsto s^{(0)}(x)$ in \eqref{eq:s0-exp-law} is decreasing and so $v$ is concave. As before, $\lim_{x \to 0} s^{(0)}(x) = +\infty$ so the control space remains unbounded, however the pay-off rate $s^{(0)} \Lambda(s^{(0)})$ is bounded. Moreover, a direct check verifies that $V^\Delta$ and $v$ are inversely proportional to the exponential depth parameter $\kappa$, i.e.\ doubling $\kappa$ (making the order book more shallow) halves $V^\Delta$ and $v$, and correspondingly halves the optimal spreads $s^{(\Delta)}$ and $s^{(0)}$.

Figure \ref{fig:exp-power-s} graphically illustrates the difference between exponential-decay and power-law LOB's. As observed, for an exponential LOB, $s^*$ is bounded away from zero, while $\lim_{x\to\infty} s^*(x) = 0$ in power LOB's. Moreover, while $\lim_{x\downarrow 0} s^{(0)}(x) = +\infty$ in any LOB, the rate is much slower in an exponential LOB (due to thinner tail for large spreads) compared to power-LOB.

\begin{figure}[ht]
\center{\includegraphics[height=3in]{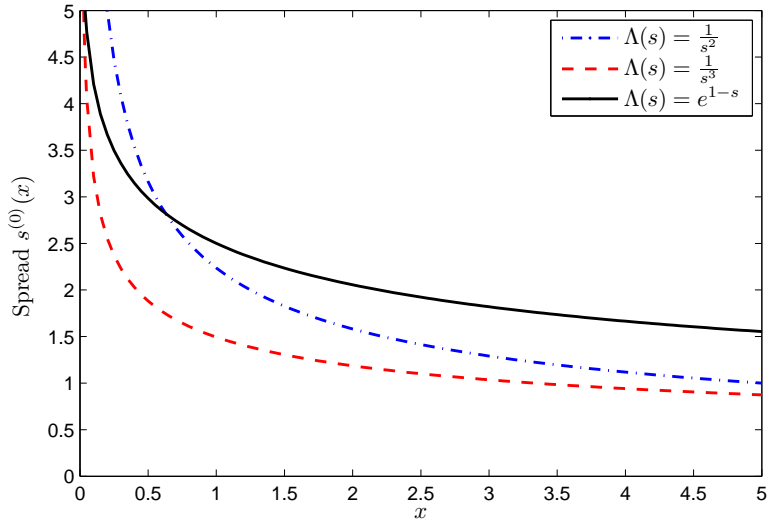}}
\caption{Optimal controls for power-law and exponential-decay order books. We take $r=0.1$ and depth functions $\La(s) \in \{ s^{-2}, s^{-3}, e^{1-s}\}$, which have been normalized such that $\La(1) = 1$ in all three cases. The plot shows
the resulting fluid limit spreads $s^{(0)}(x)$.  \label{fig:exp-power-s}}
\end{figure}

\section{Discussion and Further Extensions}\label{sec:extensions}
\subsection{Numerical Example: Convergence to the Fluid Limit}

To illustrate the convergence to the fluid limit consider the problem of selling up to $x=5$ blocks of shares in a power-law LOB with $\La(s) = s^{-2}$. We suppose that a block corresponds to 100 shares and that the minimal trading unit is either 5 or 1 shares, i.e.~$\Delta =0.05$ and $\Delta=0.01$ respectively. For a fixed $\Delta$, we can easily compute $V^\Delta(x=n\Delta)$ or $v(x)$ using the results in Section \ref{sec:power-law}. Figure \ref{fig:discrete-cont-v} illustrates the percent difference between $V^\Delta$ and the fluid limit $v$. As shown in Proposition \ref{prop:VDinctov}, $V^\Delta$ is decreasing in $\Delta$ and $\lim_{\Delta\downarrow 0} V^\Delta = v$. We observe that the convergence is quite rapid in $x$.

\begin{figure}[Ht]
\hspace*{-20pt}\begin{tabular}{cc}
\begin{minipage}{0.5\textwidth}
\center{\includegraphics[width=3.35in]{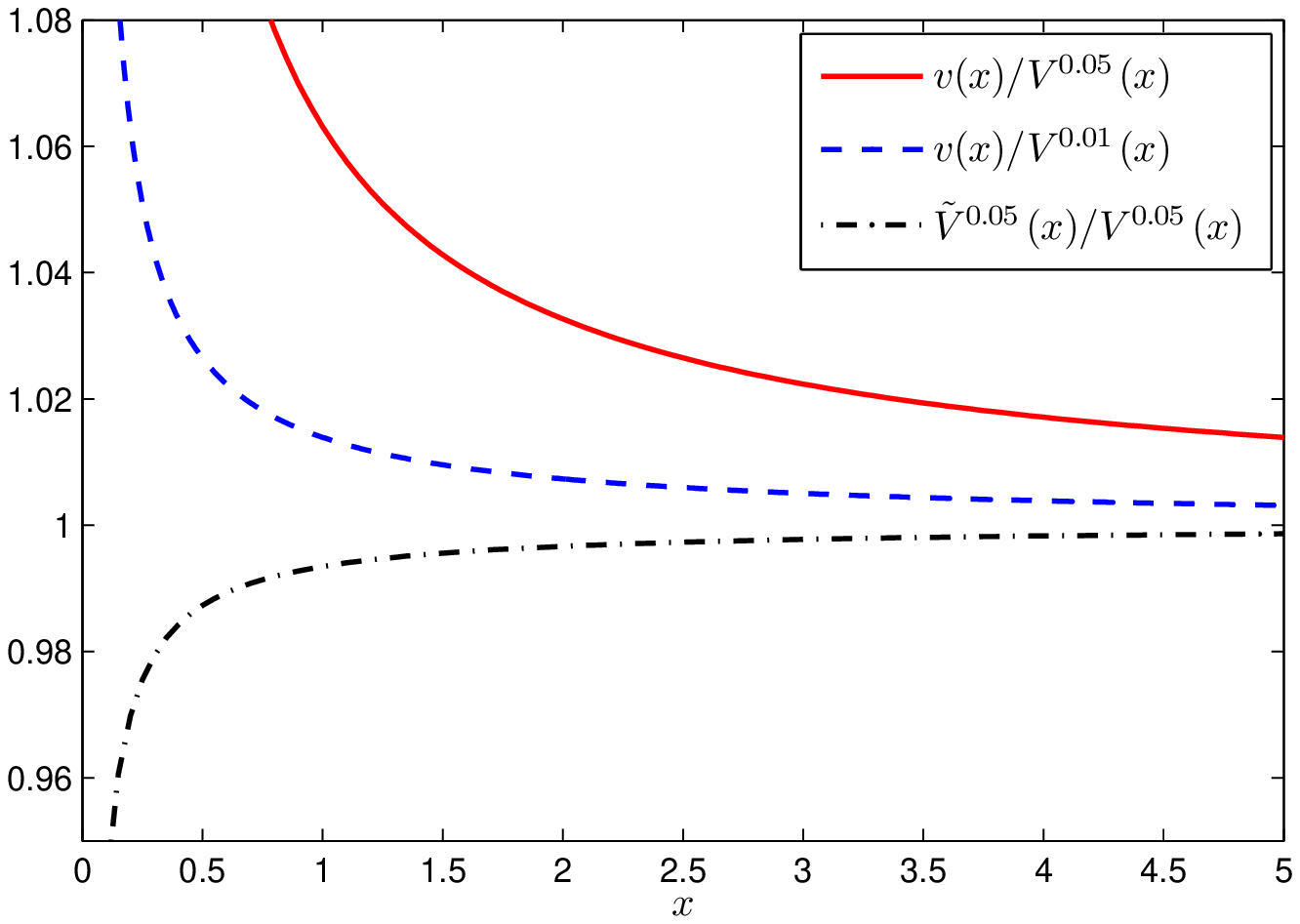}}
\end{minipage} & \begin{minipage}{0.5\textwidth}
\center{\includegraphics[width=3.35in]{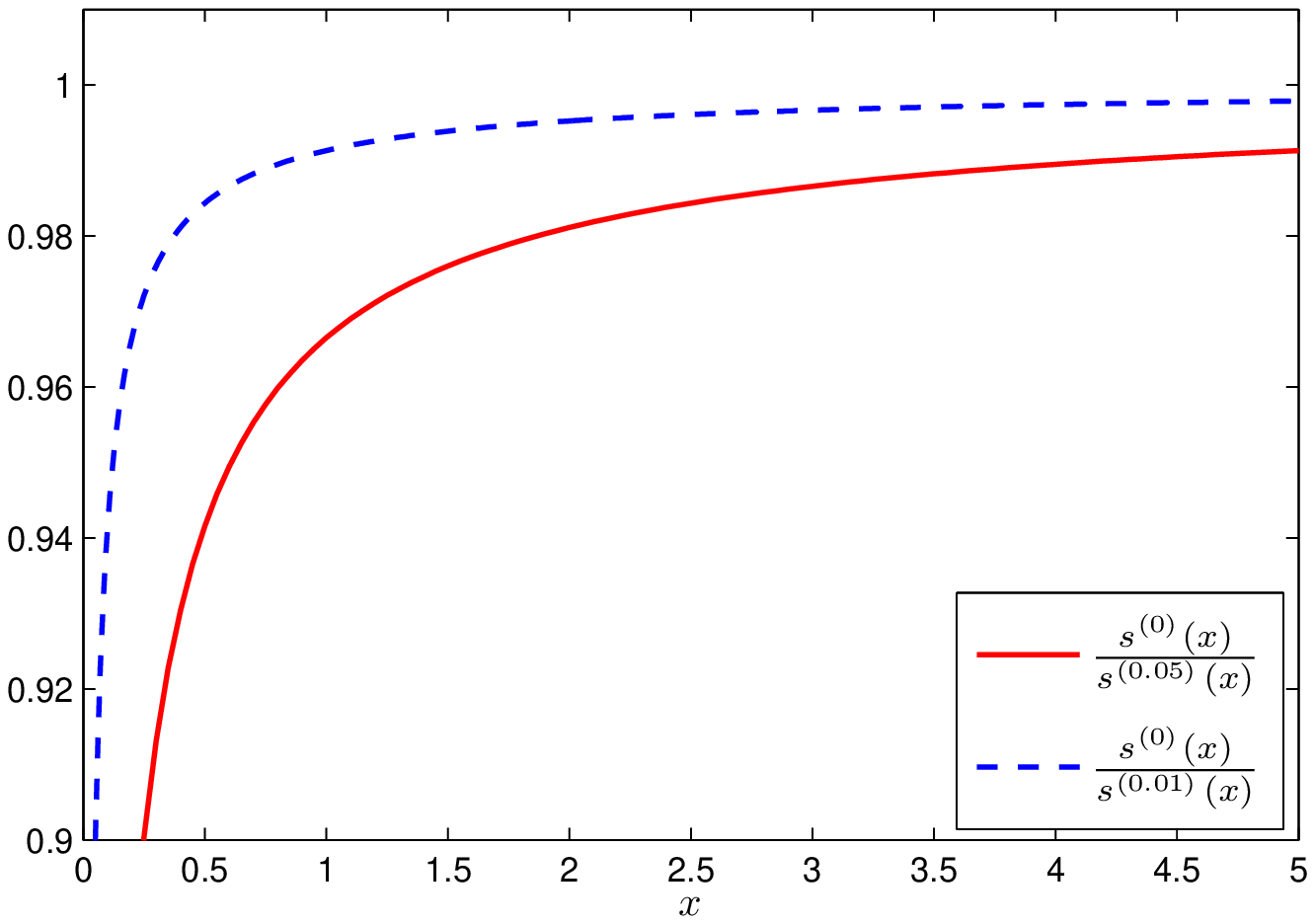}}
\end{minipage}\end{tabular}
\caption{Convergence to the fluid limit. Left panel: the ratio between discrete and continuous $V^\Delta(x)/v(x)$ for $\Delta =0.05$ and $\Delta=0.01$. Additionally, we plot $\tilde{V}^\Delta/V^\Delta$ as defined in \eqref{eq:cont-control-in-V-delta}. Right panel: ratio of the fluid limit optimal control $s^{(0)}(x)$ to the discrete $s^{(\Delta)}(x)$
for $\Delta\in \{0.01,0.05\}$.
 \label{fig:discrete-cont-v}}
\end{figure}

The right panel of Figure \ref{fig:discrete-cont-v} shows that the controls themselves are also very close. We observe that $s^{(\Delta)} \searrow s^{(0)}$. Note that here we are essentially comparing $s^{(0)}$ with its right-sided Riemann-sum approximation,  since $s^{(\Delta)}(x)$ corresponds to the spread charged for all shares in $[x,x-\Delta)$ while $s^{(0)}(x)$ corresponds to the marginal spread \emph{at} $x$. Accordingly, better approximations, such as $\check{s}^{(\Delta)}(x) := \frac{1}{\Delta} \int_{x-\Delta}^x s^{(0)}(u) \,du$, would make the discrete and fluid controls even closer.

Given the simple expression for the fluid limit control $s^{(0)}(x)$, a useful approximation is to use a discretized version of $s^{(0)}$ as an approximately optimal control for $V^\Delta$. Let
\begin{align}\label{eq:cont-control-in-V-delta}
\tilde{V}^\Delta(x) := \E\left[ \sum_{i=1}^{x/\Delta} e^{-r \tau_i} s^{(0)}( x-i\Delta) \right], \quad x \in\{0,\Delta, 2 \Delta, \ldots\},
\end{align}
represent the expected gains from a discrete strategy which uses a spread of $s^{(0)}( (n-i)\Delta)$ for the $i$-th trade of size $\Delta$. In the left panel of Figure \ref{fig:discrete-cont-v} we see that this approximation is excellent for $V^\Delta$ even for moderate values of $x$ (less than 1\% difference for $\Delta = 0.01$ and $x>1$).

\subsection{Execution Curves}
A popular way of describing a trade execution algorithm is through the execution curve $t \mapsto X_t/X_0$, see e.g.~\cite{Almgren00,Almgren03,GueantLehalle11}. In our model with execution risk, $X_t$ is a random variable, and we will therefore consider the natural analogue of average execution curve $E(x,t) :=  \E[ X^{*,x}_t]$, where $X^{*,x}$ is the remaining inventory at $t \le T$ starting with initial condition $X^{*,x}_0 = x$ . The baseline case where $\bar{E}(x,t) = x(1 - t/T)$ is linear, corresponds to ``linear price impact'' or zero-risk-aversion in \cite{Almgren00} and implies that the average trading rate is constant.

For notational convenience we temporarily fix $\Delta = 1$. Recall that
\[
dX^{*,x}_t = -\La( s^*(X^{*,x}_t, T-t)) \,dt + dM_t
\]
where $(M_t)$ is a martingale (the compensated order \emph{departure} process),
which implies by an application of It\^{o}'s formula that $(E(x,t))_{x=0}^\infty$ satisfies the system of inhomogenous linear ODEs
\begin{align*}
\frac{ dE(x,t)}{dt} =\La( s^*(x,T-t))( E({x-1},t) - E(x,t)), \quad E(x,0) = x,
\end{align*}
with $E(0,t) \equiv 0$.
%
%
Any finite collection of these ODEs can be solved analytically using integrating factors, or numerically with any standard solver.

\begin{figure}[h!]
\hspace*{-20pt}\begin{tabular}{cc}
\begin{minipage}{0.5\textwidth}
\center{\includegraphics[width=3.35in]{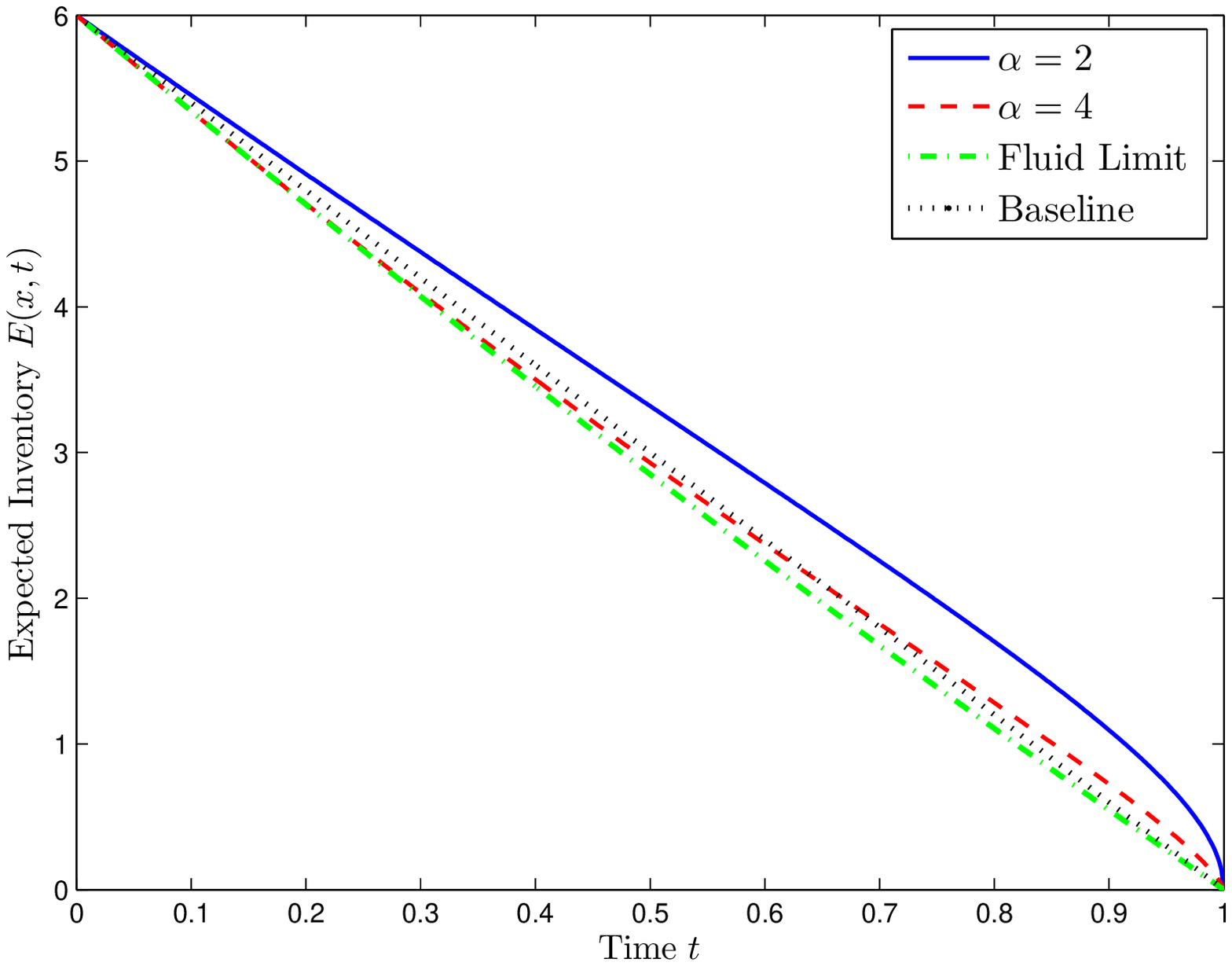}}
\end{minipage} & \begin{minipage}{0.5\textwidth}
\center{\includegraphics[width=3.35in]{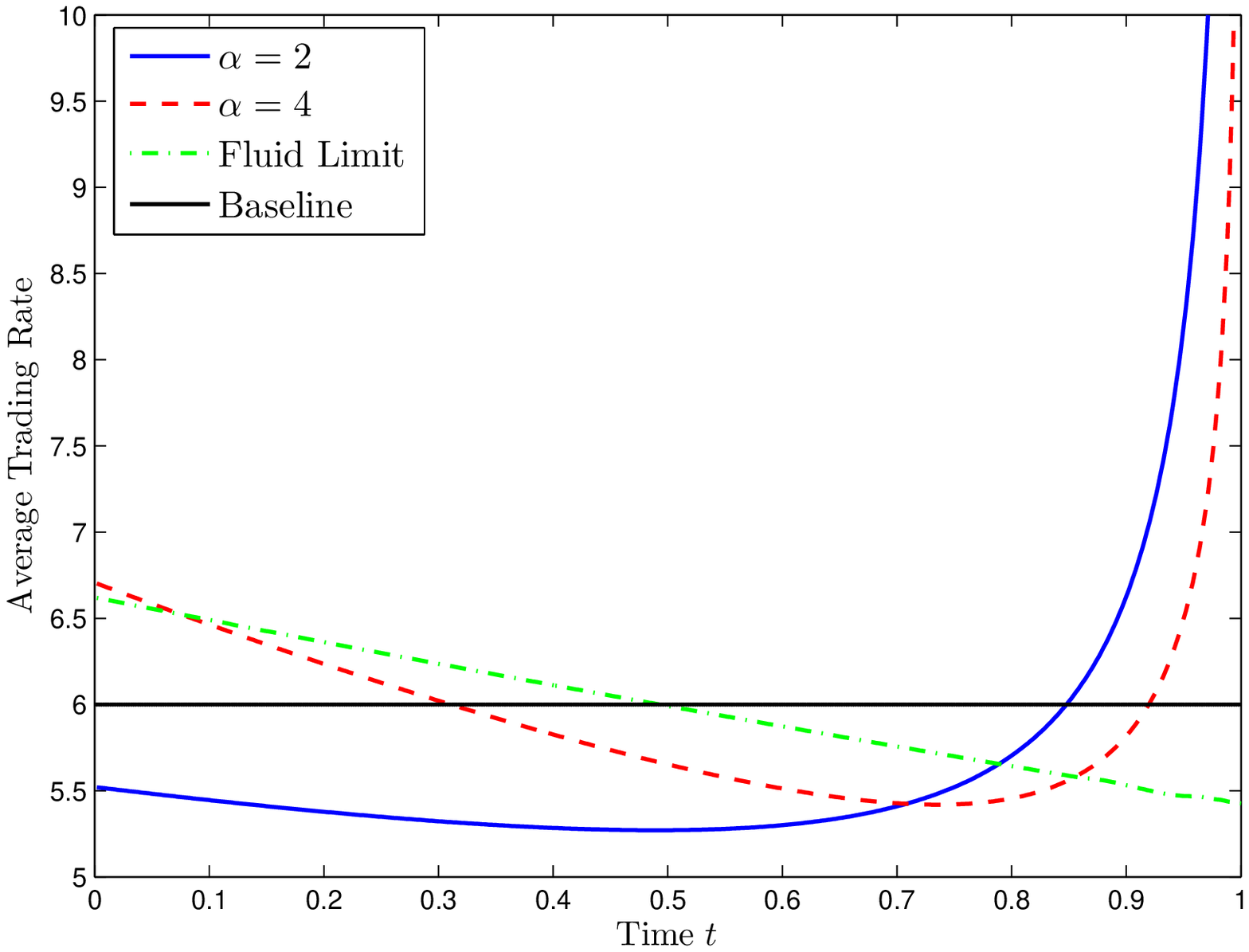}}
\end{minipage}\end{tabular}
\caption{Execution curves for different power-law books. We take $X_0 = 6$, $r=0.1$, $T=1$, and depth functions $\La(s) \in \{ s^{-2}, s^{-4}\}$.
\emph{Left panel:} average inventory $E(x,t)$ as a function of time $t$. \emph{Right panel:} average trading rate as a function of time.
\label{fig:power-trade-curve}}
\end{figure}

Thus,  whenever we have an explicit formula for the optimal spread $s^*(n,T)$, $E(x,t)$ is also available analytically (or more practically as a solution of an ODE). For the case of power-law order books, we obtain from Proposition \ref{Prop:1} that $$\La( s^*(k,T-t)) = \frac{\lambda^{-\frac{1}{\alpha-1}} (\alpha r c_k)^{\alpha/(\alpha-1)}}{1-\exp(-\alpha r (T-t))}.
$$
 We observe that depending on the parameter values (in particular initial inventory $k$ vis-a-vis order shape $\alpha$), $\La(s^*(k,T))$ could be smaller or bigger than $k/T$, i.e. the ordering between the initial trading rate and constant trading is ambiguous. At the other end,  as $t\to T$, the spread $s^*(k,t)$ goes to zero and consequently $\lim_{t\to T}d E(x,t)/dt = -\infty$.

 Figure \ref{fig:power-trade-curve} shows that as $\alpha$ increases, the shape of $t \mapsto E(x,t)$ changes substantially. In particular for large $\alpha$ (corresponding to ``thinner'' power laws), the execution curve has an S-shape, with trading rate high in the beginning and end of the time interval. On the other hand, for $\alpha$ small, the execution curve lies entirely above the baseline, i.e.~the limit order trader consistently executes slower. This occurs due to the two competing effects of trying to extract profit (which slows execution) and the time decay, i.e.~the need to make the deadline which speeds up trading. We observe that when the limit order book is deep (small $\alpha$), the profit effect dominates; this is a new phenomenon compared to most existing models, such as \cite{GueantLehalle11}.

Finally, as a comparison, Figure \ref{fig:power-trade-curve} also shows the deterministic case $t\mapsto X^{(0),x}_t$, see \eqref{eq:trade-curve-power-det}. In the latter case, $t \mapsto X^{(0),x}_t$ is strictly convex, which contrasts strongly with the pre-limit situation. As $\Delta \to 0$, execution risk vanishes and the discounting effect takes over, making the investor sell more in the beginning. Indeed, in the fluid limit the investor can smoothly drive $X^{(0),x}$ to zero, while for $\Delta >0$, $\PP(X^{*,x}_t > 0) > 0$ for any $t<T$, but $X^{*,x}_T=0$ since $dE(x,t)/dt|_{t=T-} = -\infty$.

In the exponential order book case with no discounting, we observe that the trading rate $\La(s^*(n,T-t))$ is \emph{independent} of the depth parameter $\kappa$. Moreover, numerical experiments suggest that $E(x,t)$ is (slightly) convex in $t$, i.e.~the trading rate is monotonically decreasing in time. This agrees with the classical results of \cite{Almgren00}.
Finally, we recall that Remark \ref{rem:exp-exec-curve} shows that in the fluid limit, the execution rate is constant over time, and $t \mapsto X^{(0),x}$ is linear. This occurs because $(X^{(0),x}_t,t)$ is the characteristic curve of the PDE given by \eqref{eq:exp-finh-pde}.
This phenomenon resembles the constant trading rate in Alfonsi et al.~\cite{schied07} who studied (continuous) trading through market orders only, with the LOB depth function driving the price impact mechanism. Again, we find a sharp dichotomy between the deterministic limit where $X^{(0),x}_T=0$ for $x \leq \lambda e^{-1} T$ and the stochastic version where $E(x,T) > 0$ strictly for all $x>0$.

\subsection{General Order Book Depth Functions}
Our basic setting can be readily extended to allow for more sophisticated or complex models. Below we review several such extensions; for ease of presentation we treat them in the stationary infinite-horizon setting.

Let us revisit the optimal execution problem for a generic order book depth function $\La(s)$.
In general, there are no closed-form expressions for $V(n)$ and the continuous fluid limit $v(x)$ becomes a useful analytic tool to understand the solution structure. In that regard, both Theorem \ref{thm:fir} and Corollary \ref{cor:fc} continue to hold under some reasonable assumptions on the intensity function $\Lambda$.

\begin{thm}\label{thm:general-lob}
Consider the optimal liquidation problem on infinite horizon with a general intensity depth function $\Lambda$.
Then the statements of Theorem~\ref{thm:fir} hold, and if we further
assume that the function $x \mapsto \Lambda(x)$ is decreasing and that
\begin{equation}\label{eq:cond-on-L}
\frac{\Lambda(x)\Lambda''(x)}{(\Lambda'(x))^2}<2, \quad \forall x \in \R_+,
\end{equation}
then both $V^{\Delta}$ and $v$ are concave, the corresponding controls $s^{(\Delta)}$ and $s^{(0)}$ are decreasing and the conclusion of  Corollary \ref{cor:fc} still holds.
\end{thm}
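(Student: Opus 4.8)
The plan is to route everything through the Hamiltonian
\[
H(p) := \sup_{s \ge 0} \Lambda(s)\,(s-p),
\]
which governs the discrete and fluid equations simultaneously: the infinite-horizon HJB equations read $rv(x) = H(v_x(x))$ in the fluid limit and $rV^\Delta(x) = H\!\left(p_\Delta(x)\right)$ in discrete form, where $p_\Delta(x) := (V^\Delta(x)-V^\Delta(x-\Delta))/\Delta$ is the backward slope. First I would observe that the convergence $V^\Delta \to v$ of Theorem~\ref{thm:fir} carries over verbatim: the only features of the power law used there were that the payoff rate $s\Lambda(s)$ is bounded on each truncated control set $[1/k,k]$ (giving the uniform bound \eqref{eq:bnd}) and that $-rv + H(v_x)=0$ satisfies a comparison principle on a compact control space. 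Both hold for any continuous $\Lambda$ with $\sup_{s} s\Lambda(s)<\infty$, so the regularized problems $V^{\Delta,k}$ converge to $v^k$ and $v^k \uparrow v$ exactly as before.

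For the second part I would first record the properties of $H$. Writing $g(s):=\Lambda(s)(s-p)$, the first-order condition is $\Lambda'(s^*)(s^*-p)+\Lambda(s^*)=0$, i.e.\ $s^*-p = -\Lambda(s^*)/\Lambda'(s^*)$, and substituting this into $g''$ gives the second-order condition $g''(s^*) = \big(2(\Lambda'(s^*))^2 - \Lambda(s^*)\Lambda''(s^*)\big)/\Lambda'(s^*)$. Here \eqref{eq:cond-on-L} makes the numerator positive while $\Lambda$ decreasing makes $\Lambda'<0$, so $g''(s^*)<0$ and the critical point is a genuine interior maximizer $s^*(p)$; these two hypotheses are used precisely to guarantee a well-defined optimizer. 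Differentiating the first-order condition in $p$ yields
\[
\frac{ds^*}{dp} = \frac{(\Lambda'(s^*))^2}{2(\Lambda'(s^*))^2 - \Lambda(s^*)\Lambda''(s^*)} > 0,
\]
so $s^*$ is increasing in $p$. By the envelope theorem $H'(p) = -\Lambda(s^*(p)) < 0$, so $H$ is strictly decreasing, and $H''(p) = -\Lambda'(s^*)\,ds^*/dp > 0$, so $H$ is convex. (Note \eqref{eq:cond-on-L} is exactly convexity of $1/\Lambda$.)

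Concavity then follows cheaply. Since $V^\Delta$ is increasing in $x$ and $rV^\Delta(x)=H(p_\Delta(x))$ with $H$ strictly decreasing, the slope $p_\Delta$ must be decreasing in $x$, which is precisely the discrete concavity $V^\Delta(x+\Delta)-V^\Delta(x)\le V^\Delta(x)-V^\Delta(x-\Delta)$. Concavity of $v$ follows by passing to the limit (Part 1 and Proposition~\ref{prop:VDinctov} give uniform convergence, which preserves concavity), or directly from differentiating $rv=H(v_x)$ to get $v_{xx}=rv_x/H'(v_x)<0$, using $v_x>0$ and $H'<0$. Monotonicity of the controls is then immediate, as $s^{(0)}(x)=s^*(v_x(x))$ and $s^{(\Delta)}(x)=s^*(p_\Delta(x))$ are compositions of the increasing map $s^*(\cdot)$ with the decreasing slopes $v_x$ and $p_\Delta$.

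For the conclusion of Corollary~\ref{cor:fc} I would repeat its argument now that concavity is available: $p_\Delta$ is the left derivative of the concave interpolation $\hat V^\Delta$, $v_x$ the derivative of the concave differentiable limit $v$, and Theorem~24.5 of \cite{MR1451876} gives $p_\Delta(x)\to v_x(x)$ for each $x>0$; composing with the continuous map $s^*(\cdot)$ yields $s^{(\Delta)}(x)\to s^{(0)}(x)$. I expect the main obstacle to be the rigorous treatment of $H$ for a general $\Lambda$: establishing existence, uniqueness and smoothness of $s^*(p)$, and the regularity of $v$ needed to differentiate $rv=H(v_x)$ (strict concavity forces differentiability, but one must still verify $v\in C^1$ and $v_x>0$ throughout). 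The conceptual crux is the identification of \eqref{eq:cond-on-L} as exactly the second-order/convexity condition, which is what makes every sign above line up.
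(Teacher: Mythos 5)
Your proposal is correct, and while it rests on the same core algebra as the paper --- the first-order condition $s^*-p=-\Lambda(s^*)/\Lambda'(s^*)$ and the fact that \eqref{eq:cond-on-L} is exactly the sign condition making the optimizer map monotone --- it is organized along a genuinely different route. The paper works with the resolvent form of the dynamic programming principle, $V^\Delta(x)=\sup_{s\ge 0}\frac{\Lambda(s)}{\Lambda(s)+r\Delta}\left(s\Delta+V^\Delta(x-\Delta)\right)$, proves \emph{first} that $s^{(\Delta)}$ is decreasing by showing that $F(s):=-rs\Delta-\frac{\Lambda(s)}{\Lambda'(s)}(\Lambda(s)+r\Delta)$ satisfies $F'<0$ iff \eqref{eq:cond-on-L} (the same computation as your $ds^*/dp>0$), and only \emph{then} deduces concavity of $V^\Delta$ through a chain of DPP inequalities that invokes both hypotheses ($\Lambda$ decreasing and \eqref{eq:cond-on-L}); monotonicity of $s^{(0)}$ is handled by a separate argument using $v'=s^{(0)}+\Lambda(s^{(0)})/\Lambda'(s^{(0)})$ and the fact that $y\mapsto y+\Lambda(y)/\Lambda'(y)$ is increasing --- a map which is precisely the inverse of your $p\mapsto s^*(p)$, so this is the same content in different packaging. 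You reverse the logical order: concavity of $V^\Delta$ comes first, as a one-line consequence of $rV^\Delta=H(p_\Delta)$ with $H$ strictly decreasing and $V^\Delta$ increasing (this generalizes the argument of Remark~\ref{rem:cnv} and, notably, shows that discrete concavity needs \emph{neither} extra hypothesis --- they are used only for well-posedness and monotonicity of $s^*(\cdot)$), and then both control-monotonicity claims follow simultaneously by composing the increasing map $s^*$ with the decreasing slopes $p_\Delta$ and $v_x$; this also makes the passage to Corollary~\ref{cor:fc} more explicit than the paper's ``follows immediately as before.'' Your identification of \eqref{eq:cond-on-L} with convexity of $1/\Lambda$ (equivalently of $H$) is a conceptual gloss absent from the paper. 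What the paper's route buys is that it stays closer to the probabilistic DPP and never needs $H$ to be differentiable or the envelope theorem; what yours buys is unification of the discrete and fluid cases and a sharper view of which hypothesis is responsible for which conclusion. The caveats you flag --- existence, uniqueness and smoothness of an interior maximizer $s^*(p)$ for general $\Lambda$, and the regularity of $v$ --- are equally implicit in the paper's own proof (its $F$-argument presupposes an interior critical point, and Corollary~\ref{cor:fc} presupposes $v$ differentiable), so they do not constitute a gap relative to the published argument.
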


\begin{rem}
For power-law LOB's, condition \eqref{eq:cond-on-L} holds precisely when $\alpha > 1$, while for exponential LOB's it always holds. Both of these order books have decreasing intensity functions.
\end{rem}

\begin{proof}
The proof of Theorem~\ref{thm:fir} can be done without much change since we did not make use of any special properties of $\La(s)$ there. We will prove the stated concavity and monotonicity properties from which the statement of Corollary~\ref{cor:fc} follows immediately as before.

 By time-stationarity between trading dates the controls are constant and the dynamic programming principle until the first jump time for $V^\Delta(x)$ gives
$$
V^\Delta(x) = \sup_{s \ge 0} \int_0^\infty \frac{\Lambda(s)}{\Delta} e^{-(\Lambda(s) \Delta^{-1} +r)t} (s \Delta +V^\Delta(x-\Delta))\, dt = \sup_{s \ge 0} \frac{\Lambda(s)}{\Lambda(s)+ r \Delta} \left(s \Delta +V^\Delta(x-\Delta) \right).
$$
Differentiating the right-hand-side with respect to $s$, the first order condition for $s^* \equiv s^{(\Delta)}(x)$ is
 \begin{align*} r\Lambda'(s^*)( s^*  \Delta+V^\Delta(x-\Delta)) + \Lambda(s^*)(\Lambda(s^*)+ r \Delta) & = 0 \\ \Longleftrightarrow \quad rV^\Delta (x-\Delta) = -r s^* \Delta  - \frac{\La(s^*)}{\La'(s^*)}(\La(s^*)+ r \Delta) & := F(s^*).
\end{align*}
$V^\Delta$ is non-decreasing; therefore, if the derivative of $F$ is negative, then $s^{(\Delta)}(x)$ decreases in $x$. Explicitly,
\begin{align}\label{eq:discrete-s-dec}
F'(s^*) = -\left(r\Delta +\La(s^*)\right)\left[ 2 - \frac{\La \La''}{(\La')^2}(s^*)\right] < 0 \quad \Longleftrightarrow \quad 2 > \frac{\La(s^*) \La''(s^*)}{(\La'(s^*))^2}.
\end{align}
 Thus, \eqref{eq:cond-on-L} is sufficient for $x \mapsto s^*(x)$ to be decreasing.
Under this assumption and the assumption that $\Lambda$ is decreasing
we would have that
 $x \mapsto \frac{\Lambda(s^*(x))}{\Lambda(s^*(x))+ r \Delta}$ is increasing,
and as a result
 \begin{align*}
V^\Delta(x) - V^\Delta(x-\Delta) &= \frac{\Lambda(s^*(x))}{\Lambda(s^*(x))+ r \Delta} (s^*(x) \Delta +V^\Delta(x-\Delta)) \\ & \qquad \qquad - \frac{\Lambda(s^*(x-\Delta))}{\Lambda(s^*(x-\Delta))+ r \Delta} \left(s^*(x-\Delta) \Delta +V^\Delta(x-2\Delta) \right) \\
& \le \frac{\Lambda(s^*(x))}{\Lambda(s^*(x))+ r \Delta} \left\{s^*(x-\Delta) \Delta +V^\Delta(x-\Delta) - s^*(x-\Delta)\Delta - V^\Delta(x-2\Delta)\right\} \\
& \le V^\Delta(x-\Delta) - V^\Delta(x-2\Delta),
\end{align*}
so that $V^\Delta$ is concave.

Concavity of $V^{\Delta}$ on the other hand implies the concavity of $v$.
This is thanks to the first assertion of the theorem from which we know that $V^{\Delta}$ converges to $v$ uniformly on compacts.
Next we will show that $s$ is a decreasing function. The value function $v$ satisfies the first order PDE:
\[
\sup_{s \geq 0} \Lambda(s) (s-v')-rv=0, \quad v(0)=0.
\]
Optimizing over $s$ yields
\[
v'(x) = s^{(0)}(x)+\frac{\Lambda(s^{(0)}(x))}{\Lambda'(s^{(0)}(x))}.
\]
Hence $v$ is concave if and only if the right-hand-side above is a decreasing function of $x$.
However, it
follows from \eqref{eq:cond-on-L} that the function
\[
y \mapsto y + \frac{\Lambda(y)}{\Lambda'(y)}, \quad y \in \R_+,
\]
is increasing. As a result the concavity of $v$, which we have already shown, is equivalent to $x \mapsto s^{(0)}(x)$ decreasing. \end{proof}

\subsection{Regime Switching Market Liquidity}
Empirical evidence suggests that market liquidity is not constant, \cite{CarteaJaimungal10}. As a first step towards capturing more complex liquidity behavior, we consider a simple regime-switching model for market activity level in the power-law LOB's. More precisely, suppose that arrival rates are modulated by a two-state Markov chain $M$ with states $\{0,1\}$, 0 representing an active market and 1 representing a slow market. We will denote the transition rate from 0 to 1 by $\theta_0$, and the transition rate from 1 to 0 by $\theta_1$. Under regime 0, the arrival rates of the orders are
$\Lambda_0(s)=\lambda_0/s^{\alpha}$, and under regime 1, the arrival rates of the orders are $\Lambda_1(s)=\lambda_1/s^{\alpha}$. We will take $\lambda_0>\lambda_1$. We assume that $M$ is observed and known by market participants.

Denote by $U(n)$ (respectively $W(n)$) the infinite-horizon value function for an inventory of $n$ shares under the active (resp.~slow) market regime.
The value functions satisfy the following system of equations:
\[
\begin{split}
A_{\alpha} \lambda_0 [U(n)-U(n-1)]^{1-\alpha}-r U(n)+\theta_0 [W(n)-U(n)]&=0, \\
A_{\alpha} \lambda_1 [W(n)-W(n-1)]^{1-\alpha}-r W(n)+\theta_1 [U(n)-W(n)]&=0,
\end{split}
\]
with terminal condition $U(0)=W(0)=0$.
The continuous selling approximation of these functions, which we denote by $u$ and $w$ respectively satisfy the following system of ordinary differential equations:
\begin{align}\label{eq:cont-u-w} \left\{
\begin{aligned}
A_{\alpha} \lambda_0 u_{x}^{1-\alpha}-(r+\theta_0)u(x)+\theta_0 w(x)&=0,\\
A_{\alpha} \lambda_1 w_{x}^{1-\alpha}-(r+\theta_1)w(x)+\theta_1 u(x)&=0,
\end{aligned}\right.
\end{align}
with $u(0)=w(0)=0$.

\begin{prop}\label{prop:regime-switching-power}
The solutions to \eqref{eq:cont-u-w} are $u(x) = c^*_0 x^p$ and $w(x) = c_1^* x^p$  where $p=(\alpha-1)/\alpha$ and
\begin{align}\label{eq:regime-switching-c}
\left(\frac{\lambda_1}{r \alpha}\right)^{1/\alpha}<c^*_1<c^*_0<\left(\frac{\lambda_0}{r \alpha}\right)^{1/\alpha}.
\end{align}
\end{prop}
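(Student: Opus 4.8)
The plan is to reduce the coupled system \eqref{eq:cont-u-w} to a scalar pair of algebraic equations via the self-similar ansatz $u(x)=c_0 x^p$, $w(x)=c_1 x^p$, and then to treat existence and the ordering \eqref{eq:regime-switching-c} separately. The exponent $p=(\alpha-1)/\alpha$ is forced: for $u=c_0x^p$ one has $u_x^{1-\alpha}=(c_0 p)^{1-\alpha}x^{(p-1)(1-\alpha)}$, and $(p-1)(1-\alpha)=p$ precisely for this $p$, so every term in each equation carries the common factor $x^p$. Using the elementary identity $A_\alpha p^{1-\alpha}=1/\alpha$ (immediate from $A_\alpha=(\alpha-1)^{\alpha-1}/\alpha^\alpha$), the system collapses to
\[
\frac{\lambda_0}{\alpha} c_0^{1-\alpha} - (r+\theta_0) c_0 + \theta_0 c_1 = 0, \qquad \frac{\lambda_1}{\alpha} c_1^{1-\alpha} - (r+\theta_1) c_1 + \theta_1 c_0 = 0.
\]
The proposition thus amounts to showing this pair has a positive solution whose components obey \eqref{eq:regime-switching-c}.

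For the ordering I would introduce $h_i(c):=\frac{\lambda_i}{\alpha}c^{1-\alpha}-rc$, which is strictly decreasing on $\R_+$ with unique zero $a_i:=(\lambda_i/(r\alpha))^{1/\alpha}$, and note $a_0>a_1$ since $\lambda_0>\lambda_1$. Rewriting the two equations as $h_0(c_0)=\theta_0(c_0-c_1)$ and $h_1(c_1)=-\theta_1(c_0-c_1)$ makes the sign of $D:=c_0-c_1$ the pivot. If $D=0$ then $c_0=a_0=a_1=c_1$, impossible; if $D<0$ then $h_0(c_0)<0$ and $h_1(c_1)>0$ force $c_0>a_0$ and $c_1<a_1$, giving $c_0>a_0>a_1>c_1$, i.e.\ $D>0$, a contradiction. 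Hence $D>0$, whereupon $h_0(c_0)>0$ yields $c_0<a_0$ and $h_1(c_1)<0$ yields $c_1>a_1$; combined with $c_0>c_1$ this is exactly the chain $a_1<c_1<c_0<a_0$. Note this argument applies to \emph{any} positive solution, so it simultaneously delivers the bounds and rules out spurious roots outside the strip.

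Existence I would obtain by a one-variable intermediate value argument. Solve the first equation for $c_1=\psi(c_0):=\theta_0^{-1}[(r+\theta_0)c_0-\frac{\lambda_0}{\alpha}c_0^{1-\alpha}]$; this $\psi$ is strictly increasing, vanishes at $b_0:=(\lambda_0/(\alpha(r+\theta_0)))^{1/\alpha}<a_0$, is positive on $(b_0,\infty)$, and a short computation using $\frac{\lambda_0}{\alpha}a_0^{1-\alpha}=ra_0$ gives the clean endpoint value $\psi(a_0)=a_0$. Substituting into the second equation defines $H(c_0):=\frac{\lambda_1}{\alpha}\psi(c_0)^{1-\alpha}-(r+\theta_1)\psi(c_0)+\theta_1 c_0$, continuous on $(b_0,a_0]$. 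As $c_0\downarrow b_0$ we have $\psi(c_0)\downarrow 0$, so $\psi(c_0)^{1-\alpha}\to+\infty$ and $H\to+\infty$; at the right endpoint $H(a_0)=\frac{\lambda_1}{\alpha}a_0^{1-\alpha}-ra_0=h_1(a_0)<0$. The intermediate value theorem produces $c_0^*\in(b_0,a_0)$ with $H(c_0^*)=0$, and $c_1^*:=\psi(c_0^*)>0$ completes a positive solution, to which the preceding ordering argument applies.

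The main obstacle is the existence step: one must choose the scalar reduction and the right endpoints ($b_0$ and $a_0$) so that the sign change of $H$ is transparent, in particular establishing the identity $\psi(a_0)=a_0$ and the blow-up of $H$ at $b_0$. The ordering, by contrast, is essentially free once the equations are put in $h_i$-form. For uniqueness of the constants (so that ``the solutions'' is literally justified) I would either invoke that $u,w$ are the continuous-selling limits of the value functions $U,W$ and hence uniquely characterised by the comparison theory already used in Theorem~\ref{thm:fir}, or sharpen the analysis of $H$ to show it admits a single sign change on $(b_0,a_0)$.
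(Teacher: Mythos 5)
Your proposal is correct in everything it actually carries out, and it takes a genuinely different route from the paper. The reduction itself (the ansatz, the forced exponent $p$, and the identity $A_\alpha p^{1-\alpha}=1/\alpha$) coincides with the paper's, as it must. From there the paper argues geometrically: it rewrites the algebraic system as two strictly increasing curves, $c_0=\tfrac{r+\theta_1}{\theta_1}c_1-\tfrac{\lambda_1}{\alpha\theta_1}c_1^{1-\alpha}$ and $c_1=\tfrac{r+\theta_0}{\theta_0}c_0-\tfrac{\lambda_0}{\alpha\theta_0}c_0^{1-\alpha}$, each running from $-\infty$ to $+\infty$, asserts that they intersect exactly once, and reads off the ordering \eqref{eq:regime-switching-c} from where each curve crosses the diagonal $c_0=c_1$ (at $a_1$ and $a_0$ respectively). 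Your decomposition $h_0(c_0)=\theta_0(c_0-c_1)$, $h_1(c_1)=-\theta_1(c_0-c_1)$ replaces this picture with a sign analysis of $D=c_0-c_1$, which is arguably cleaner and buys something the paper's argument does not make explicit: the bounds hold for \emph{every} positive solution of the system, not just the one constructed. Your existence step (intermediate value theorem for $H$ on $(b_0,a_0]$, anchored by $\psi(a_0)=a_0$ and the blow-up at $b_0$) is a scalar substitute for the paper's curve-crossing argument, and it is complete and correct.

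The one item you leave open is uniqueness of $(c_0^*,c_1^*)$, which the paper's proof does assert and which the definite article in the statement arguably requires. Neither of your proposed fallbacks is needed; uniqueness closes in two lines inside your own framework. Writing $H(c_0)=h_1(\psi(c_0))+\theta_1\bigl(c_0-\psi(c_0)\bigr)$ and noting
\[
\psi'(c_0)=\frac{r+\theta_0}{\theta_0}+\frac{\lambda_0(\alpha-1)}{\alpha\theta_0}\,c_0^{-\alpha}>1,
\qquad h_1'<0,
\]
you get $H'(c_0)=h_1'(\psi(c_0))\,\psi'(c_0)+\theta_1\bigl(1-\psi'(c_0)\bigr)<0$, so $H$ is strictly decreasing on $(b_0,a_0]$ and has exactly one zero; since positivity of $c_1=\psi(c_0)$ forces $c_0>b_0$ and your ordering argument forces $c_0<a_0$, every positive solution corresponds to a zero of $H$ in this interval, hence is unique. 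Your first fallback (invoking the comparison theory behind Theorem~\ref{thm:fir}) would be a much heavier and genuinely separate undertaking, and note that even with uniqueness of the constants, both you and the paper only prove uniqueness within the power-function family, not uniqueness of solutions of the ODE system \eqref{eq:cont-u-w} itself.
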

Observe that the bounds in \eqref{eq:regime-switching-c} correspond to the single-regime solutions given in \eqref{eq:toy-pde-sol} with $T=+\infty$.

\begin{proof}
We begin with an ansatz of $u(x)=c_0x^{p}$ and $w(x)=c_1 x^{p}$ with $p$ given in the statement of the proposition. Comparing with \eqref{eq:cont-u-w}, the coefficients $c_0$ and $c_1$ need to satisfy
\[
\begin{split}
A_{\alpha} \lambda_0 p^{1-\alpha}c_0^{1-\alpha}-(r+\theta_0)c_0+\theta_0 c_1&=0,\\
A_{\alpha} \lambda_1 p^{1-\alpha}c_1^{1-\alpha}-(r+\theta_1)c_1+\theta_1 c_0&=0.\\
\end{split}
\]
Re-writing as
 \begin{equation}\label{eq:twfn}
 \begin{split}
 c_0&=\frac{r+\theta_1}{\theta_1}c_1-\frac{\lambda_1}{\alpha \theta_1}c_1^{1-\alpha},\\
 c_1&=\frac{r+\theta_0}{\theta_0}c_0-\frac{\lambda_0}{\alpha \theta_0}c_0^{1-\alpha},
 \end{split}
 \end{equation}
it easily follows that this system of equations has a unique solution $(c^*_0,c^*_1)$. Indeed,
$c_0$ as a function of $c_1$ is strictly increasing and goes from $-\infty$ at $c_1=0$ to $\infty$ and $c_1=\infty$. Similarly, $c_1$ as a function of $c_0$ is also strictly increasing from $-\infty$ at $c_0=0$ to $\infty$ at $c_0=\infty$. It directly follows from these facts that these two curves intersect and do so at only one point. Moreover, the identity function $c_0=c_1$, intersects the first function \eqref{eq:twfn} first, and the second function in the same equation last. This proves the ordering in \eqref{eq:regime-switching-c} and concludes the proof.
\end{proof}

Proposition \ref{prop:regime-switching-power} shows that the asking spread will always be higher under the active market regime when the order book is deeper.

Figure \ref{fig:regime-switching} illustrates the impact of multiple liquidity regimes. We take $\la_0 = 1.5, \la_1 = 0.5$ so that trade intensity is tripled in the active regime. We plot $c_i^*$ as a function of $\theta_0 = \theta_1 = \theta$ for $r =0.1$ and $\alpha =2$. As $\theta \to 0$, we have $c_i^* \to \sqrt{ 0.5 \la_i r^{-1}}$, while as $\theta \to \infty$, $c_i^* \to \sqrt{\frac{1}{2 r} \frac{\la_0 + \la_1}{2}}$ the fast-switching limit.

\begin{figure}[ht]
\center{\includegraphics[height=3in]{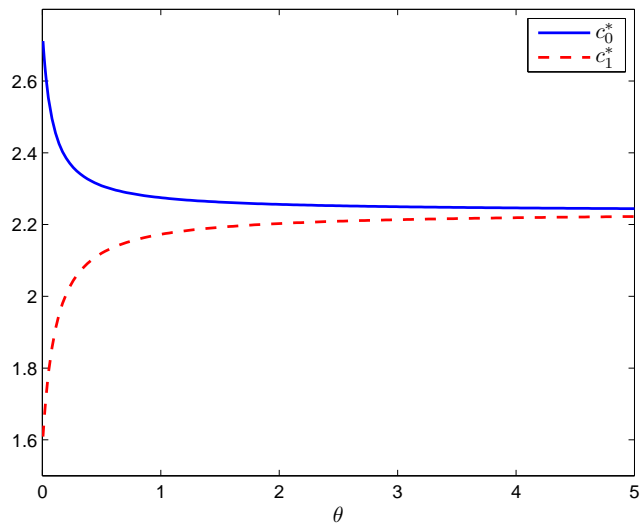}}
\caption{Regime switching model. We take $\alpha_0  = 2, \lambda_0 = 1.5, \lambda_1 = 0.5$ and $r=0.1$. The regime-switching rates are equal $\theta_0 = \theta_1 = \theta$.
 \label{fig:regime-switching}}
\end{figure}

\subsection{Two-Exchange Multi-scale Model}\label{sec:multi-scale}
Another possibility is to consider an investor trading on multiple venues. For example, suppose the investor can liquidate her holdings through two different exchanges, with each exchange possessing its own LOB. To distinguish the two exchanges,
we suppose that on exchange C(ontinuous) the orders are infinitesimally small, but on exchange L(arge) they are of \textit{large but finite size} relative to the total order size. More precisely, we assume that in the continuous limit, the exchange C orders are infinitesimal, but exchange L orders are of size $\delta$. If the remaining inventory is less than $x$ we assume that the next trade on exchange L will liquidate the entire $x$. In other words, actual trades on exchange L are of size $\min(\delta, x)$.

To keep the model tractable, we assume that each exchange has power-law depth with identical depth parameter $\alpha > 1$.
The resulting time-stationary value function $v(x)$ solves
\begin{align}\label{eq:inf-horizon-2-exchange}
\sup_{s_0 \geq 0} {\la}_0 \frac{s_0-v_x}{s_0^{\alpha}}+\sup_{s_1 \geq 0}\la_1 \frac{(\delta \wedge x) s_1-(v(x)-v( (x-\delta)_+))}{s_1^{\alpha}}-rv=0, \quad v(0)=0.
\end{align}
Plugging in the first-order optimizers leads to
\begin{align}\label{eq:v-delay-ode}
A_{\alpha} \lambda_0 v'(x)^{1-\alpha} + A_{\alpha} \lambda_1 (x \wedge\delta)^\alpha (v(x)-v((x-\delta)_{+}))^{1-\alpha} - r v = 0, \qquad v(0) = 0.
\end{align}

\begin{lemma}
There is a unique solution to \eqref{eq:v-delay-ode}. 
\end{lemma}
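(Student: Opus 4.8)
The plan is to prove existence and uniqueness by the \emph{method of steps}, exploiting the fact that \eqref{eq:v-delay-ode} is a delay equation with a single fixed lag $\delta$. Partition $\R_+$ into the intervals $I_k := [k\delta,(k+1)\delta]$, $k\ge 0$. On each $I_k$ the delayed argument $(x-\delta)_+$ ranges over $I_{k-1}$ (or collapses to $0$ when $k=0$), so once $v$ is known on $[0,k\delta]$ the term $v((x-\delta)_+)$ is a \emph{known} function on $I_k$ and the delay equation reduces to a first-order ODE for $v$ there. The solution on all of $\R_+$ is obtained by concatenating these pieces, and uniqueness is automatic once each single-interval problem is uniquely solvable, since the construction leaves no freedom after $v(0)=0$ is fixed.

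First I would treat the base interval $I_0=[0,\delta]$, where $(x-\delta)_+=0$, hence $v((x-\delta)_+)=v(0)=0$ and $(x\wedge\delta)=x$; the equation becomes $A_\alpha\lambda_0 (v')^{1-\alpha}+A_\alpha\lambda_1 x^\alpha v^{1-\alpha}-rv=0$ with $v(0)=0$. This is where the genuine obstacle lies: the initial condition is \emph{singular}, with $v'(0^+)=+\infty$, so Picard--Lindel\"of does not apply directly. To regularize, I would substitute $v(x)=x^{p}h(x)$ with $p:=(\alpha-1)/\alpha$. A direct computation turns the equation into
\[
A_\alpha\lambda_0\,(ph+xh')^{1-\alpha}+A_\alpha\lambda_1\, x\, h^{1-\alpha}-r h=0,
\]
which, evaluated at $x=0$, forces the regular initial value $h(0)=(\lambda_0/(r\alpha))^{1/\alpha}$ (using $A_\alpha p^{1-\alpha}=1/\alpha$); this is exactly the single-exchange constant from \eqref{eq:toy-pde-sol} at $T=+\infty$, as it must be since near $0$ the $\lambda_1$-term is of higher order. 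Solving for $h'$ yields a singular ODE of Fuchsian type $x h'=G(x,h)$ with $G(0,h(0))=0$, for which existence and uniqueness of a $C^1$ solution through $(0,h(0))$ follow from a standard contraction-mapping argument in a weighted space (alternatively, from a comparison argument squeezing $v$ between sub- and supersolutions of the form $c\,x^{p}$). This delivers a unique increasing solution $v$ on $I_0$ with $v(\delta)>0$.

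On each subsequent interval $I_k$, $k\ge1$, set $g(x):=v(x-\delta)$, which is already constructed, continuous, and strictly increasing on $I_k$ with $g(k\delta)=v((k-1)\delta)<v(k\delta)$. Solving \eqref{eq:v-delay-ode} for the derivative gives
\[
v'(x)=\left(\frac{A_\alpha\lambda_0}{\,r v(x)-A_\alpha\lambda_1\delta^\alpha\,(v(x)-g(x))^{1-\alpha}\,}\right)^{1/(\alpha-1)},
\]
with initial value $v(k\delta)$ inherited from $I_{k-1}$. The starting point lies in the admissible region because the denominator at $x=k\delta$ equals $A_\alpha\lambda_0\,(v'((k\delta)^-))^{1-\alpha}>0$ by the previous step; since also $v(k\delta)>g(k\delta)$, the right-hand side is smooth and locally Lipschitz in $v$ near the starting point, so Picard--Lindel\"of yields a unique local solution (indeed matching $C^1$ across $x=k\delta$). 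I would then extend it across all of $I_k$ via a priori control: as $v-g$ decreases toward the level where the denominator vanishes one has $v'\to+\infty$, which pushes $v$ back up and keeps $v-g$ bounded away from that level, while as $v\to\infty$ one has $v'\to 0$; these two self-regulating barriers prevent the solution from leaving the admissible region or blowing up in finite $x$, so it is defined and unique on all of $I_k$ and matches the previous piece continuously at $x=k\delta$.

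The main obstacle is thus the singular base step at $x=0$; the substitution $v=x^{p}h$ is the key device that converts it into a regular (Fuchsian) problem with a forced initial value. A secondary point requiring care, which I would verify alongside existence, is that the solution stays strictly increasing with $v(x)>v((x-\delta)_+)$ and keeps the denominator positive, so that the fractional powers in \eqref{eq:v-delay-ode} remain well defined throughout; this monotonicity can be read off either from the self-regulating barrier above or, more conceptually, from the control representation \eqref{eq:inf-horizon-2-exchange} of which $v$ is the value function.
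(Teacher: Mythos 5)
Your proposal is correct and follows the same overall strategy as the paper's proof: the method of steps (the paper calls it ``successive patching''), reducing the delay equation on each interval $[k\delta,(k+1)\delta]$ to an ordinary initial value problem in which the delayed term is a known source. The meaningful difference is in the base step. The paper disposes of every interval with the single assertion that ``the corresponding ODE has a locally Lipschitz driver so classical results give existence/uniqueness,'' but on $[0,\delta]$ that assertion fails at the initial point: with $v(0)=0$ the denominator in
\[
v'(x)=\left(\frac{A_\alpha\lambda_0}{\,r v(x)-A_\alpha\lambda_1 x^\alpha v(x)^{1-\alpha}\,}\right)^{1/(\alpha-1)}
\]
vanishes at $(0,0)$, so the driver is not defined (let alone Lipschitz) there and $v'(0^+)=+\infty$ --- a singularity the paper acknowledges only in the numerical remark that follows the lemma. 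Your substitution $v=x^{(\alpha-1)/\alpha}h(x)$ is a genuine repair of this gap: the reduced equation you state is exactly right, it does force $h(0)=(\lambda_0/(r\alpha))^{1/\alpha}$ (via $A_\alpha p^{1-\alpha}=1/\alpha$), and the resulting Fuchsian problem $xh'=G(x,h)$ satisfies $G(0,h(0))=0$ with $\partial_h G(0,h(0))<0$, which is precisely the sign condition under which the Briot--Bouquet/contraction argument you invoke yields a unique branch through $(0,h(0))$. One detail you should nail down: uniqueness of $h$ gives uniqueness of $v$ only among solutions with the asymptotics $v\asymp x^{(\alpha-1)/\alpha}$ near $0$, so the sub/supersolution squeeze you mention as an ``alternative'' is actually needed to rule out other boundary behavior; more cheaply, one can observe that the driver above is strictly decreasing in $v$, so forward uniqueness follows from the monotonicity argument $\frac{d}{dx}(v_1-v_2)^2\le 0$ with no asymptotic analysis at all. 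Your treatment of the later intervals --- the identification of the denominator at $x=k\delta$ with $A_\alpha\lambda_0\bigl(v'((k\delta)^-)\bigr)^{1-\alpha}>0$, the resulting $C^1$ matching, and the self-regulating barriers preventing exit from the admissible region --- is sound and is, if anything, more explicit than the paper's one-line appeal to classical ODE theory, which is only literally valid for those intervals away from the singular origin.
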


\begin{proof}
Equation \eqref{eq:v-delay-ode} is a first order nonlinear delay ODE and can be solved  by successive patching. Namely, first solve the ODE
\begin{equation}\label{eq:ODEfi}
A_{\alpha} \lambda_0 v_{(0)}'(x)^{1-\alpha} + A_{\alpha} \lambda_1 x^\alpha( v_{(0)}(x))^{1-\alpha} - r v_{(0)}= 0,
\end{equation}
with $v_{(0)}(0)=0$ on $[0,\delta]$. We then solve
\begin{align}\label{eq:ode-fi-2}
A_{\alpha} \lambda_0 v_{(1)}'(x)^{1-\alpha} + A_{\alpha} \lambda_1 \delta^\alpha( v_{(1)}(x) - v_{(0)}(x-\delta))^{1-\alpha} - r v_{( 1)}= 0
\end{align}
on $[\delta,2\delta]$ with initial condition $v_{(1)}(\delta) = v_{(0)}(\delta)$.  In \eqref{eq:ode-fi-2} we treat $v_{(0)}$ as a source term, observing that $v_{(0)}(x-\delta)$ with $x\in[\delta,2\delta]$ has already been computed before. Proceeding in this fashion, we finally set $v(x) = v_{(n)}(x)$ for $x \in [n\delta, (n+1)\delta]$ to recover the global solution. On each of the intervals $[n\delta, (n+1)\delta]$ the corresponding ODE has a locally Lipschitz driver so classical results give existence/uniqueness of solution $v_{(n)}$.
\end{proof}

\begin{rem}
Numerical computation of the solution of \eqref{eq:ODEfi} should be handled with care since $v_{(0)}'(0)=\infty$. We get around this singularity using the following observation:
For $x$ small enough, the benefit of large orders is negligible since the probability of getting a large order is very small. Therefore, close to zero, $v(x) \simeq v_0(x) = (\frac{ \la_0}{\alpha r})^{1/\alpha} x^{\frac{\alpha-1}{\alpha}}$ from \eqref{eq:toy-pde-sol}.

We also remark that the solution to \eqref{eq:ODEfi} is in general no longer concave, with concavity likely to fail around the knots $\delta, 2\delta,\ldots$, where the derivative $v'$ does not exist.
\end{rem}

Typically, trading intensity on the small-order exchange is several magnitudes larger than via the big trades (done through e.g.~a proprietary dark pool, see e.g., \cite{SchiedKlock11} where a single large dark pool trade liquidates the entire position), so $\la_0 \gg \la_1$. Fixing the time-scale as $\la_0 = 1$, we are therefore led to consider an asymptotic expansion in small $\la_1$. Formally, let $\la_1 = \bar{\la} \eps$ for $\eps$ small and consider a power series expansion in $\eps$,
\[
v(x) = v_0(x) + \eps v_1(x) + \eps^2 v_2(x) + \ldots,
\]
Plugging into \eqref{eq:v-delay-ode} and matching powers of $\eps$ we find that $v_0(x)$ solves the 1-exchange problem of \eqref{eq:toy-pde-sol}, so that $v_0(x) = (\frac{ \la_0}{\alpha r})^{1/\alpha} x^{\frac{\alpha-1}{\alpha}}$. Next,
\[
A_\alpha \alpha (1-\alpha) r x v_1'(x) + A_\alpha \bar{\la} (\delta \wedge x)^\alpha \left( \frac{\la_0}{\alpha r} \right)^{\frac{1-\alpha}{\alpha}} \bigl( x^{\frac{\alpha-1}{\alpha}} - (x-\delta)_+^{\frac{\alpha-1}{\alpha}} \bigr)^{1-\alpha} - r v_1(x) = 0.
\]
This is a first-order linear ODE with non-constant coefficients and therefore $v_1(x)$ can be expressed in closed-form using integrating factors as
\begin{align}
v_1(x) = \left\{ \begin{aligned} C_1 x^{2-\alpha^{-1}}, & \qquad x \le \delta, \\
 x^{-B_\alpha} \cdot \int_0^x C_2 y^{B_\alpha+\alpha-1} \bigl( y^{\frac{\alpha-1}{\alpha}} - (y-\delta)^{\frac{\alpha-1}{\alpha}}\bigr)^{1-\alpha} dy, & \qquad x > \delta, \end{aligned}\right.
\end{align}
with $C_1 = \frac{\bar{\la} \alpha r}{(1-\alpha)} (\frac{\la_0}{ \alpha r})^{(1-\alpha)/\alpha} \frac{1}{B_\alpha - 2+\alpha^{-1}}$, $C_2 = A_\alpha \bar{\la} \delta^\alpha B_\alpha \left( \frac{\lambda_0}{ \alpha r}\right)^{(1-\alpha)/\alpha}$ and $B_\alpha = \frac{\alpha^{\alpha-1}}{(\alpha-1)^\alpha}$. The latter integral only involves powers of $y$ and can be easily computed numerically. Similarly, the equations for higher-order terms are again first-order linear ODEs and so $v_2$, etc., can be written iteratively in closed-form.

\bibliographystyle{siam}
\bibliography{references}

\begin{thebibliography}{10}

\bibitem{schied07}
{\sc A.~Alfonsi, A.~Fruth, and A.~Schied}, {\em Optimal execution strategies in
  limit order books with general shape functions}, Quant. Finance, 10 (2010),
  pp.~143--157.

\bibitem{AlfonsiSchied10}
{\sc A.~Alfonsi and A.~Schied}, {\em Optimal trade execution and absence of
  price manipulations in limit order book models}, SIAM J. Financial Math., 1
  (2010), pp.~490--522.

\bibitem{Almgren00}
{\sc R.~Almgren}, {\em Optimal execution of portfolio transactions}, Journal of
  Risk, 3 (2000), pp.~5--–39.

\bibitem{Almgren03}
{\sc R.~Almgren}, {\em Optimal execution with nonlinear impact functions and
  trading-enhanced risk}, Appl. Math. Finance, 10 (2003), pp.~1--18.

\bibitem{sa08}
{\sc M.~Avellaneda and S.~Stoikov}, {\em High-frequency trading in a limit
  order book}, Quant. Finance, 8 (2008), pp.~217--224.

\bibitem{MR1484411}
{\sc M.~Bardi and I.~Capuzzo-Dolcetta}, {\em Optimal control and viscosity
  solutions of {H}amilton-{J}acobi-{B}ellman equations}, Systems \& Control:
  Foundations \& Applications, Birkh\"auser, Boston, MA, 1997.

\bibitem{MR1115933}
{\sc G.~Barles and P.~E. Souganidis}, {\em Convergence of approximation schemes
  for fully nonlinear second order equations}, Asymptotic Anal., 4 (1991),
  pp.~271--283.

\bibitem{BauerleAAP00}
{\sc N.~B{\"a}uerle}, {\em Asymptotic optimality of tracking policies in
  stochastic networks}, Ann. Appl. Probab., 10 (2000), pp.~1065--1083.

\bibitem{Bauerle01}
\leavevmode\vrule height 2pt depth -1.6pt width 23pt, {\em Discounted
  stochastic fluid programs}, Math. Oper. Res., 26 (2001), pp.~401--420.

\bibitem{BauerleAAP02}
\leavevmode\vrule height 2pt depth -1.6pt width 23pt, {\em Optimal control of
  queueing networks: an approach via fluid models}, Adv. in Appl. Probab., 34
  (2002), pp.~313--328.

\bibitem{MR2519845}
{\sc N.~B{\"a}uerle and U.~Rieder}, {\em M{DP} algorithms for portfolio
  optimization problems in pure jump markets}, Finance Stoch., 13 (2009),
  pp.~591--611.

\bibitem{BL10}
{\sc E.~Bayraktar and M.~Ludkovski}, {\em Optimal trade execution in illiquid
  financial markets}, Math. Finance, 21(4) (2011), pp.~681--701.

\bibitem{MR511544}
{\sc D.~P. Bertsekas and S.~E. Shreve}, {\em Stochastic optimal control},
  vol.~139 of Mathematics in Science and Engineering, Academic Press Inc., New
  York, 1978.

\bibitem{CarteaJaimungal10}
{\sc A.~Cartea and S.~Jaimungal}, {\em Modeling asset prices for algorithmic
  and high frequency trading}, 2010.
\newblock Available at SSRN: http://ssrn.com/abstract=1722202.

\bibitem{ContDeLarrard10}
{\sc R.~Cont and A.~De~Larrard}, {\em Price dynamics in a {M}arkovian limit
  order market}, tech. report, Available at SSRN:
  http://ssrn.com/abstract=1735338, 2010.

\bibitem{ContStoikovTalreja}
{\sc R.~Cont, S.~Stoikov, and R.~Talreja}, {\em A stochastic model for order
  book dynamics}, Operations Research, 58 (2010), pp.~549--563.

\bibitem{MR1118699}
{\sc M.~G. Crandall, H.~Ishii, and P.-L. Lions}, {\em User's guide to viscosity
  solutions of second order partial differential equations}, Bull. Amer. Math.
  Soc. (N.S.), 27 (1992), pp.~1--67.

\bibitem{day10}
{\sc M.~V. Day}, {\em Weak convergence and fluid limits in optimal
  time-to-empty queueing control problems}, Applied Mathematics \&
  Optimization, 64 (3) (2011), pp.~339--362.

\bibitem{MR2179357}
{\sc W.~H. Fleming and H.~M. Soner}, {\em Controlled {M}arkov processes and
  viscosity solutions}, vol.~25 of Stochastic Modelling and Applied
  Probability, Springer, New York, second~ed., 2006.

\bibitem{GueantLehalle11}
{\sc O.~Gu{\'e}ant, C.~Lehalle, and J.~Tapia}, {\em Optimal portfolio
  liquidation with limit orders}, 2011.
\newblock Available as arxiv preprint arXiv:1106.3279.

\bibitem{GuilbaudPham11}
{\sc F.~Guilbaud and H.~Pham}, {\em Optimal high frequency trading with limit
  and market orders}, 2011.
\newblock Available at SSRN: http://ssrn.com/abstract=1871969.

\bibitem{SchiedKlock11}
{\sc F.~Kl{\"o}ck, A.~Schied, and Y.~Sun}, {\em Existence and absence of price
  manipulation in a market impact model with dark pool}, 2011.
\newblock Available at SSRN: http://ssrn.com/abstract=1785409.

\bibitem{KratzSchoneborn10}
{\sc P.~Kratz and T.~Schoeneborn}, {\em Optimal liquidation in dark pools},
  2010.
\newblock Available at SSRN: http://ssrn.com/abstract=1344583.

\bibitem{ObizhaevaWang06}
{\sc A.~A. Obizhaeva and J.~Wang}, {\em Optimal trading strategy and
  supply/demand dynamics}, 2005.
\newblock Available at SSRN: http://ssrn.com/abstract=686168.

\bibitem{PiunovskiyMMOR09}
{\sc A.~B. Piunovskiy}, {\em Random walk, birth-and-death process and their
  fluid approximations: absorbing case}, Math. Methods Oper. Res., 70 (2009),
  pp.~285--312.

\bibitem{Piunovskiy11}
{\sc A.~B. Piunovskiy and Y.~Zhang}, {\em Accuracy of fluid approximations to
  controlled birth-and-death processes: absorbing case}, Math. Methods Oper.
  Res., 73 (2011), pp.~159--187.

\bibitem{MR1451876}
{\sc R.~T. Rockafellar}, {\em Convex analysis}, Princeton Landmarks in
  Mathematics, Princeton University Press, Princeton, NJ, 1997.

\bibitem{WeberRosenow05}
{\sc P.~Weber and B.~Rosenow}, {\em Order book approach to price impact},
  Quant. Finance, 5 (2005), pp.~357--364.

\end{thebibliography}

\end{document}